\begin{document}

\title{Two new algorithms for solving  M\"uller games and their applications}

\author{Zihui Liang}
\authornote{Corresponding authors.}
\email{zihuiliang.tcs@gmail.com}
\orcid{0000-0002-9022-6470}
\author{Bakh Khoussainov}
\authornotemark[1]
\email{bmk@uestc.edu.cn}
\author{Mingyu Xiao}
\email{myxiao@uestc.edu.cn}
\orcid{0000-0002-1012-2373}
\affiliation{%
  \institution{University of Electronic Science and Technology of China}
  \streetaddress{2006 Xiyuan Avenue}
  \city{Chengdu}
  \state{Sichuan}
  \country{China}
  \postcode{611731}
}

\renewcommand{\shortauthors}{Liang et al.}

\begin{abstract}
M\"uller games form a well-established class of games for model checking and verification. These games are played on directed graphs $\mathcal G$ where Player 0 and Player 1 play by generating an infinite path through the graph. The winner is determined by  the set $X$ consisting of all vertices in the path that occur infinitely often. If $X$ belongs to $\Omega$, a specified collection  of subsets  of $\mathcal G$, then Player 0 wins. Otherwise, Player 1 claims the win. These games are determined, enabling  the partitioning of $\mathcal G$ into two sets $W_0$ and $W_1$ of winning positions for Player 0 and Player 1, respectively.   Numerous algorithms exist that decide Müller games $\mathcal G$ 
by computing the sets $W_0$ and $W_1$. 
In this paper, we introduce two novel algorithms that outperform all previously known methods for deciding explicitly given Müller games, especially in the worst-case scenarios. The previously known algorithms either reduce M\"uller games to other known games (e.g. safety games) or recursively change the underlying graph $\mathcal G$ and the collection of sets in $\Omega$.  In contrast, our approach does not employ these techniques but instead leverages subgames, the sets within $\Omega$, and their interactions. This distinct methodology sets our algorithms apart from prior approaches for deciding 
Müller games. Additionally, our algorithms offer enhanced clarity and ease of comprehension.   Importantly, our techniques are applicable not only to M\"uller games but also to improving the performance of existing algorithms that handle other game classes, including coloured M\"uller games, McNaughton games,  Rabin games, and Streett games. 
\end{abstract}

\begin{CCSXML}
<ccs2012>
   <concept>
       <concept_id>10003752.10003790.10011192</concept_id>
       <concept_desc>Theory of computation~Verification by model checking</concept_desc>
       <concept_significance>300</concept_significance>
       </concept>
   <concept>
       <concept_id>10002950.10003624.10003625</concept_id>
       <concept_desc>Mathematics of computing~Combinatorics</concept_desc>
       <concept_significance>500</concept_significance>
       </concept>
   <concept>
       <concept_id>10003752.10010070.10010099.10010108</concept_id>
       <concept_desc>Theory of computation~Representations of games and their complexity</concept_desc>
       <concept_significance>500</concept_significance>
       </concept>
   <concept>
       <concept_id>10003752.10010070.10010099.10010100</concept_id>
       <concept_desc>Theory of computation~Algorithmic game theory</concept_desc>
       <concept_significance>500</concept_significance>
       </concept>
   <concept>
       <concept_id>10003752.10003777.10003787</concept_id>
       <concept_desc>Theory of computation~Complexity theory and logic</concept_desc>
       <concept_significance>100</concept_significance>
       </concept>
 </ccs2012>
\end{CCSXML}

\ccsdesc[300]{Theory of computation~Verification by model checking}
\ccsdesc[500]{Mathematics of computing~Combinatorics}
\ccsdesc[500]{Theory of computation~Representations of games and their complexity}
\ccsdesc[500]{Theory of computation~Algorithmic game theory}
\ccsdesc[100]{Theory of computation~Complexity theory and logic}
\keywords{M\"uller games, McNaughton games, Rabin games, Streett games, deciding games}

\maketitle

\section{Introduction}

In the area of verification and synthesis of reactive systems, model checking, and logic, studying games played on finite graphs is a key research topic \cite{gradel2002automata}. The most current work \cite{fijalkow2023games} serves as an excellent reference for the state-of-the-art methods in this area. Interest in these games primarily arises from their role in modeling and verifying reactive systems and their specifications as games on graphs. These games are played on finite directed graphs between Player 0 (the controller) and Player 1 (the adversary, e.g., the environment). 
The players engage in ongoing interactions with each other, and the winner is determined by the long-term behavior of the players.  M\"uller games, McNaughton games, coloured M\"uller games, Rabin games, and Streett games constitute  well-established classes of games for verification.  These games are played on bipartite
graphs $\mathcal G$  where Player 0 and Player 1 play the game by producing an infinite path $\rho$ in $\mathcal G$.  Then the winner of
this play is determined by conditions put on $\mathsf{Inf}(\rho)$ the set  of all vertices in the path that appear infinitely often. Thus, 
the winning conditions depend solely on those vertices 
that occur infinitely often in the given play $\rho$. 
Understanding the algorithmic content of determinacy results for these games  is at the core of the area.

All games that we listed above, including M\"uller games, are played in arenas that we define below:

\begin{definition}
    An {\bf arena} $\mathcal{A}$, or equivalently a {\bf game graph}, is a bipartite  directed graph  $(V_0, V_1, E)$, where 
\begin{enumerate}
    \item   $V_0\cap V_1=\emptyset$, and $V=V_0\cup V_1$ is the set of nodes of  $\mathcal A$. The nodes of $V$ will also be called {\bf positions}.
    \item $E\subseteq V_0 \times V_1 \cup V_1 \times V_0$ is the set of edges such that every node has an outgoing edge. 
    \item $V_0$ and $V_1$ are sets of positions from which Player 0 and Player 1, respectively, move. Nodes in $V_0$ are called Player 0 positions, and nodes in $V_1$ are Player 1 positions.
\end{enumerate}
\end{definition}

Let $\mathcal A$ be an arena. Players play the game in the arena $\mathcal A$ by taking turns and moving a token along the edges of the underlying graph. Initially, the token is placed on a node $v_0 \in V$. 
If $v_0\in V_0$, then Player 0 moves first. Conversely, if $v_0\in V_1$, then Player 1 moves first.   
In each round of play, if the token is positioned 
on a Player $\sigma$'s position $v$, then Player $\sigma$ chooses $u\in E(v)$, moves the token to $u$ along the edge $(v,u)$, and the play continues on to the next round. Formally, 

\begin{definition}
Let $\mathcal A$ be an arena. A {\bf play}, that starts at position $v_0$, is an infinite sequence $\rho=v_0,v_1, v_2, \ldots$ such that $v_{i+1}\in E(v_i)$ for all $i\in \mathbb{N}$.  \ Note that in the play we used  the assumption $E(v)\neq \emptyset$.
\end{definition}

Given a play  $\rho=v_0, v_1, \ldots$, the set  
$\mathsf{Inf}(\rho)=\{v\in V \mid \exists^{\omega} i (v_i=v)\}$ 
is called the {\bf infinity set} of $\rho$. The winner of this play  is determined by  a condition put on $\mathsf{Inf}(\rho)$. We list several of these conditions that are well-established in the area.

\begin{definition} \label{dfn:WinCon}
Let $\mathcal{A}=(V_0,V_1,E)$ be an arena. All the games below are called {\bf regular} games: 
\begin{enumerate}
\item A {\bf M\"uller game} is the tuple $\mathcal{G}=(\mathcal{A}, \Omega)$, where  $\Omega \subseteq 2^{V}$.  Sets in $\Omega$ are called {\bf winning conditions}. We say that Player 0 {\bf wins} the play $\rho=v_0, v_1, \ldots$
if $\mathsf{Inf}(\rho) \in \Omega$. Otherwise, Player 1 wins. 
\item A {\bf McNaughton game} is the tuple $\mathcal{G}=(\mathcal{A}, W, \Omega)$, where 
$W\subseteq V$, and $\Omega \subseteq 2^{W}$ is a collection of {\bf 
winning conditions}. Player 0 {\bf wins} $\rho=v_0,v_1,\ldots$ if $\mathsf{Inf}(\rho)\cap W\in \Omega$. Else, Player 1 wins.
\item 
A {\bf coloured M\"uller game} is $\mathcal{G}=(\mathcal{A}, c, \Omega)$,  where $c: V\rightarrow C$ is a mapping from $V$ into the set $C$ of colors, and $\Omega \subseteq 2^{C}$. \  Call sets in $\Omega$ {\bf winning conditions}. Player 0 {\bf wins} $\rho=v_0, v_1, \ldots$
if $c( \mathsf{Inf}(\rho)) \in \Omega$. Else, Player 1 wins. 

\item A {\bf Rabin game} is the tuple $\mathcal{G}=(\mathcal{A},(U_1, V_1), \ldots, (U_k, V_k))$, where 
$U_i, V_i \subseteq V$, $(U_i, V_i)$ is a {\bf winning condition}, and the {\bf index $k\geq 0$} is an integer. Player 0 {\bf wins} $\rho=v_0, v_1, \ldots$ if there is a pair $(U_i, V_i)$ such that 
$\mathsf{Inf}(\rho) \cap U_i\neq \emptyset$ and $\mathsf{Inf}(\rho) \cap V_i= \emptyset$. Else, Player 1 wins.
\item A {\bf Streett game} is the tuple $\mathcal{G}=(\mathcal{A},(U_1, V_1), \ldots, (U_k, V_k))$, where 
$U_i$, $V_i$ are as in Rabin game. Player 0 {\bf wins} $\rho=v_0, v_1, \ldots$ if for all $i\in \{1, \ldots, k\}$
if $\mathsf{Inf}(\rho) \cap U_i\neq \emptyset$ then $\mathsf{Inf}(\rho) \cap V_i\neq \emptyset$. Otherwise, Player 1 wins.

\item A {\bf KL game} is the tuple $\mathcal{G}=(\mathcal{A}, (u_1, S_1),\ldots,(u_t,S_t))$, where $u_i\in V$, $S_i\subseteq V$ is a {\bf winning condition}, and the {\bf index $t\geq 0$} is an integer. Player 0 {\bf wins} $\rho=v_0, v_1, \ldots$ if there is a pair $(u_i, S_i)$ such that 
$u_i\in \mathsf{Inf}(\rho)$ and $\mathsf{Inf}(\rho) \subseteq S_i$. Else, Player 1 wins. 
\end{enumerate} 
\end{definition}

The first three games are symmetric, e.g., for the M\"uller game $\mathcal{G}=(\mathcal{A}, \Omega)$ its  symmetric counter-part $(\mathcal{A}, 2^V\setminus \Omega)$ is also M\"uller game. Player 0 loses in game $(\mathcal{A}, \Omega)$ if and only if Player 1 wins in  $(\mathcal{A}, 2^V\setminus \Omega)$. Rabin games can be considered as Streett games. Player 0 wins  Rabin game $\mathcal G$ if and only if Player 0 loses the Streett game $\mathcal G$. The first five winning conditions have become well-established. The last condition is new. The motivation behind this new winning condition lies in the transformation of Rabin and Streett games into Müller games via the KL winning condition. In a precise sense, as will be seen in Section \ref{S:Applications}, the KL condition serves as a compressed Rabin winning condition.

The games defined above possess natural {\bf parameters}. In M\"uller game, the parameter is $\Omega$. In McNaughton games the parameter is the pair $(W, \Omega)$. In colored M\"uller games the parameter is $(C, \Omega)$. For $KL$ games the parameter is $(u_1, S_1),\ldots,(u_t,S_t))$ and the index $t$. In Rabin and Streett games the parameter is the winning condition sequence $(U_1, V_1), \ldots, (U_k, V_k)$ and the index $k$, the length of the sequence of the winning condition pairs. We denote the parameter values by $p$, so the value of $p$ belong to the set $\{|W|, |C|, t, k, |\Omega|\}$. 
The values of these parameters, when $p=t$, $p=k$ or $p=|\Omega|$, can be exponential on the size of the arenas. 


\begin{definition}
Let $\mathcal G$ be any of the regular games above. We say that $\mathcal G$ is {\bf explicitly given} if $V$, $E$, and all the winning conditions of the game $\mathcal G$, e.g., the sets in $\Omega$
in case $\mathcal G$ is a M\"uller game, are fully presented as input. 
\end{definition}

For instance, the (input) size of explicitly given M\"uller game is thus bounded by $|V|+|E|+ 2^{|V|}\cdot |V|$. 
In particular, the explicit representation of any of the regular games can be exponential on the size of the arena of the game. 

A strategy for Player $\sigma$ is a function that receives as input initial segments of plays $v_0,v_1,\ldots, v_k$ where $v_k\in V_\sigma$ and outputs some $v_{k+1}$ such that $v_{k+1}\in E(v_k)$. For regular games, an important class of strategies are finite state strategies.  
The key is that these strategies depend only on a finite bounded part of the full history of the plays. R. McNaughton  in \cite{mcnaughton1993infinite} proved that the winner in McNaughton games always has a finite state winning strategy.  W. Zielonka proves 
that the winners of regular games have finite state winning strategies \cite{zielonka1998infinite}.  


In the study of regular games, the focus is naturally placed on  solving  them. Solving a given regular game entails two key objectives. First, one aims to devise an algorithm that, when provided with a regular game $\mathcal G$, partitions the set $V$ of positions into two sets $Win_0$ and $Win_1$ such that $v\in Win_{\sigma}$ if and only if Player $\sigma$ wins the game starting at $v$, where $\sigma \in \{0,1\}$. We call this the {\bf decision problem} where one wants to find out the winner of the game. Second,  one would like to design an algorithm that, given a regular game, extracts a winning strategy for the victorious player. This is known as the {\bf synthesis problem} where one wants to design a winning strategy for the winner.  





\section{Background and our contribution}

In this section, we briefly provide a background on algorithms that solve the above mentioned games with an emphasis on M\"uller games. We then introduce the basic well-established concepts needed in the study of games played on graphs. 

\subsection{Known algorithms} \label{S:Known-algorithms}

We start with M\"uller games. 
McNaughton in \cite{mcnaughton1993infinite} decides M\"uller games in time $O(a^{|V|}|V|! |V|^{3})$ for some constant $a>1$. He  
proves that the winner has a finite state winning strategy with at most $|V|!$ states.  Nerode, Remmel, and Yakhnis 
\cite{nerode1996mcnaughton} decide M\"uller games in $O(|V|!\cdot2^{|V|}|V|^3|E|)$. W. Zielenka \cite{zielonka1998infinite} examines M\"uller games  through specifically constructed Zielenka trees. The size of each Zielonka tree is $O(2^{|V|})$ in the worst case. \ S. Dziembowski, M. Jurdzinski, and I. Walukiewicz in \cite{dziembowski1997much} show that deciding M\"uller games with Zielonka trees as part of the input is in $\text{NP}\cap \text{co-NP}$. They also show that the bound $|V|!$ on the memory of winning strategies is sharp. 
D. Neider, R. Rabinovich, and M. Zimmermann reduce M\"uller games to safety games with $O((|V|!)^3)$ vertices and safety games can be solved in linear time \cite{neider2014down}.  F. Horn in \cite{horn2008explicit} provides the first polynomial time decision algorithm for explicitly given M\"uller games. The running time of his algorithm is $O(|V|\cdot |\Omega| \cdot (|V|+|\Omega|)^2)$.   F. Horn's correctness proof has a non-trivial flaw.  B. Khoussainov, Z. Liang, and M. Xiao in \cite{liang_et_al:LIPIcs.ESA.2023.79} provide a correct proof of Horn's algorithm through new techniques and methods.  Those techniques improve the running time of deciding M\"uller games to $O( |\Omega|\cdot(|V|+|\Omega|)\cdot |V_0|\log |V_0|  )$.

All the known algorithms that decide M\"uller games are either recursive algorithms or reductions to other known classes of games. Some recursive algorithms are based on induction techniques that decrease the sizes of arenas $\mathcal G$ or the winning condition $\Omega$, and then recompute the winning sets repeatedly. For instance, McNaughton algorithm, Nerode, Remmel, Yakhnis algorithm, and Zielenka's algorithm are of this type. These algorithms typically produce $|V|!$ running time for deciding M\"uller games. Other recursive algorithms are based on changing the structure of the underlying graphs and the winning sets. For instance, Horn's algorithm increases the size of the underlying set  to $|V|+|E|+|\Omega|+ |\Omega||V|$. An example of an algorithm that reduces M\"uller games to another class of known games is by D. Neider, R. Rabinovich, and M. Zimmermann \cite{neider2014down}. They reduce M\"uller games to safety games. Their reduction increases the size of the graph of the safety game to  $O((|V|!)^3)$. As we noted above, Horn's algorithm runs in time $O(|V|\cdot |\Omega| \cdot (|V|+|\Omega|)^2)$ polynomial on the size of the explicitly given M\"uller games. The degree of $|\Omega|$ in this bound is $|\Omega|^3$.  The degree of $|\Omega|$ in 
the bound $O( |\Omega|\cdot(|V|+|\Omega|)\cdot |V_0|\log |V_0|  )$ from \cite{liang_et_al:LIPIcs.ESA.2023.79} is $|\Omega|^2$. This is a significant reduction because the size of $\Omega$ can be exponential on $|V|$. 

With respect to McNaughton games, 
McNaughton \cite{mcnaughton1993infinite} provided the first algorithm that decides the games in time $O(a^{|W|}\cdot |W|! \cdot |V|^3)$, for a constant $a>1$. Nerode, Remmel, and Yakhnis in \cite{nerode1996mcnaughton} improved the bound to $O(|W||E||W|!)$. 
A. Dawar and P. Hunter proved that
finding the winner in McNaughton games is PSPACE-complete problem \cite{hunter2008complexity}. This implied that deciding the winner in games with all other winning conditions from Definition \ref{dfn:WinCon} is also PSPACE-complete \cite{hunter2008complexity}.  

As McNaughton games can easily be transformed into coloured M\"uller games, there has been a lot of work on designing algorithms for coloured M\"uller games. The standard algorithm that decides coloured M\"uller games uses induction on cardinality of $C$ \cite{fijalkow2023games}. These algorithms run in time $O(|C||E|(|C||V|)^{|C|-1})$.  C. Calude, S. Jain, B. Khoussainov, W. Li, and F. Stephan, using their breakthrough quasi-polynomial time algorithm for parity games, improve all the known algorithms for colored M\"uller games \cite{calude2017deciding}. Their algorithm runs in time $O(|C|^{5|C|}\cdot |V|^5)$. Bj\"orklund, Sandberg and Vorobyov \cite{bjorklund2003fixed} showed that under the Exponential Time Hypothesis it is impossible to decide colored M\"uller games in $O(2^{o(|C|)} \cdot |V|^a)$ 
for any constant $a$. C. Calude, S. Jain, B. Khoussainov, W. Li, and F. Stephan in \cite{calude2017deciding} improved this by showing that  under the Exponential Time Hypothesis it is impossible to decide colored M\"uller games in $2^{o(|C|\cdot \log(|C|))}Poly(|V|)$, where $|C| \leq \sqrt{|V|}$. 

We mention two algorithms with the best running time bounds for deciding Rabin and Streett games. Horn's algorithm  
for deciding Streett games has the running time  $O(k!|V|^{2k})$. N. Piterman and A. Pnuelli in 2006 provide algorithms that decide Rabin games $O(|E| |V|^{k+1}kk!)$ and Streett games in $O(nkk!)$ \cite{piterman2006faster}. 

Finally, the common feature of all these algorithms and their analysis is that they all take into account the {\bf parameters}.  
Hence they appear in describing the running times. We stress that the running times of these algorithms, in terms of the parameters $p$, where $p$ is either $|W|$ or $k$ or $|C|$, contain multiplicative term $p!$ or $p^p$.  Therefore, when the sizes of the parameters are large, all the algorithms mentioned above produce the worst case running times. In practice, these algorithms have limited power as they can be applied to games with rather small parameters.  In this paper, we design algorithms with exponential running time on the size of the vertex sets, thus outperforming known algorithms that decide 
games with large parameters. 


\subsection{Basic concepts} \label{S:basics}

To explain our contributions, we define standard well-established concepts used throughout our algorithms. 

\begin{definition}
    A {\bf pseudo-arena} of $\mathcal{A}$ determined by $X$ is $\mathcal{A}(X)=(X_0,X_1,E_X)$ where $X_0=V_0\cap X$, $X_1=V_1\cap X$, $E_X=E\cap (X\times X)$. If this pseudo-arena is an arena, then we call it the {\bf subarena} of $\mathcal{A}$ determined by $X$. By $\mathcal G(X)$ denote the M\"uller game played on the subarena $\mathcal{A}(X)$.
\end{definition}

Let us consider Player $\sigma$, where  $\sigma \in \{0,1\}$. The opponent of Player $\sigma$ is denoted by Player $\bar{\sigma}$. Traps are sub-arenas in games where one of the players has no choice but stay. Here is a formal definition: 
 
\begin{definition}[$\sigma$-trap]
  A subarena $\mathcal{A}(X)$ is a {\bf $\sigma$-trap} for Player $\sigma$ if each of the following two conditions are satisfied:  (1)  For all $x\in X_{\bar\sigma}$ there is a  $y\in X_{\sigma}$ such that $(x,y) \in E$.  (2)  For all $x \in X_{\sigma}$ it is the case that $E(x)\subseteq X$.
  \end{definition}
  Thus, if $\mathcal{A}(X)$ is a $\sigma$-trap, then Player $\bar{\sigma}$ can stay in $\mathcal{A}(X)$ forever 
  if the player wishes to do so. 
  



Let $T$ be a subset of the arena  $\mathcal{A}=(V_0,V_1,E)$. The attractor of Player $\sigma$ to the set $T\subseteq V$, denoted $Attr_\sigma(T,\mathcal{A})$, is the set of positions from where Player $\sigma$ can force the plays into $T$. The attractor $Attr_\sigma(T,\mathcal{A})$ is computed as follows:

\smallskip

\noindent
$W_0=T$, \ \ 
$W_{i+1}=W_i\cup \{u\in V_\sigma\mid E(u)\cap W_i \ne \emptyset\} \cup \{u\in V_{\bar\sigma} \mid E(u)\subseteq W_i\}$, \ and then set \  
$Attr_\sigma(T,\mathcal{A})=\bigcup_{i\geq 0}W_i$. 

\smallskip

The set $Attr_\sigma(T,\mathcal{A})$ can be computed in $O(|E|)$. We call $Attr_\sigma$ the attractor operator. Note that the set $V \setminus Attr_\sigma(T,\mathcal{A})$, the complement of the $\sigma$-attractor of $T$, is a $\sigma$-trap for all $T$. This set is the emptyset if and only if $V=Attr_\sigma(T,\mathcal{A})$.

\smallskip

As explained in Section \ref{S:Known-algorithms}, all the previously known algorithms take into account the parameters $p\in \{|W|, |C|, k\}$, and their running times contain the multiplicative terms  
$p^p$ or $p!$. Thus, these algorithms are suited for games with small parameters,  and they are prohibitively slow when the game parameters are large.  Hence, designing exponential time algorithms for large games  are important as they greatly outperform all the known algorithms for regular games.

In order to address this large vs small games issue we utilize the Lambert $\boldsymbol{W}$ function $LW$.
The function $LW(z)$ is a solution to the equation $we^w=z$. By \cite{hoorfar2008inequalities}, for $z>1$, we have the following:
$$
LW(z)\ge \frac{\log z}{1+\log z}(\log z-\log\log z + 1)>\log\log z.
$$
It is easy to see the following sequence of implications that are derived from comparing $p^p$ and $e^n$, where $n=|V|$: \ 
$$
p^p> e^n \Rightarrow e^{p\ln p}>e^n \Rightarrow p\ln p > n \Rightarrow e^{\ln p}\ln p > n \Rightarrow \ln p > LW(n) 
\Rightarrow p>e^{LW(n)}.
$$

Since $LW(n)e^{LW(n)}=n$ we get the equality 
$ e^{LW(n)}=\frac{n}{LW(n)}$. This implies  $p>\frac{n}{LW(n)}$.

\begin{definition}
Let $p \in \{|C|,|W|, k\}$ be a game parameter.  A regular game $\mathcal G$ is {\bf large} if $p >\frac{c\cdot n}{LW(c\cdot n)}$.
\end{definition}

 It can formally be argued that with increasing $n$, 
 the probability of selecting a large game tends to $1$. For example, let us randomly select a McNaughton game $\mathcal G$ played on arena of size $n$. Each game $\mathcal{G}$ contains a unique ordered winning condition $\Omega$. Then the number of total games and the number of small games on the arena of $\mathcal G$ are:
 $$
 \sum_{i=0}^{2^n}\binom{2^n}{i}i! \ = \ \sum_{i=0}^{2^n}\frac{2^n!}{(2^n-i)!} \  \  \  \  \  \   \   \  \   \mbox{and} \  \  \  \  \  \ \  \    \  \  \  \  \  \sum_{i=0}^{ 2^{\lfloor\frac{n}{LW(n)}\rfloor}}\frac{2^n!}{(2^n-i)!}<\sum_{i=0}^{ 2^{\lfloor\frac{n}{\log \log n}\rfloor}}\frac{2^n!}{(2^n-i)!}, \ \mbox{respectively.}
 $$
 When $n\rightarrow \infty$, the ratio of the small games to large games approaches $0$. Hence, the probability of selecting a large game approaches to $1$. Therefore, understanding decision algorithms for large games is an important, and theoretically natural, issue. Also, this paper motivates the study of small games as for large games we provide efficient solutions.

\section{Our contribution}

We list our three main contributions:  
\begin{itemize}
\item  We develop two algorithms for deciding Müller games.
We start with  M\"uller games because (1) they can be decided in polynomial time when given explicitly, (2) they serve as a platform for demonstrating our core concepts and the data structure.  The first algorithm runs in time $O(3^{|V|}(|V|+|E|))$. The second algorithm runs in time $O(3^{|V|}|V|)$.  
By utilizing these two algorithms, we provide the most efficient polynomial time algorithms to date that decide explicitly given M\"uller games. 
To illustrate this, 
when M\"uller games are large,  the best known algorithm  runs in time $O( |\Omega|\cdot(|V|+|\Omega|)\cdot |V_0|\log |V_0|)$ \cite{liang_et_al:LIPIcs.ESA.2023.79}.  Our first algorithm runs in time $O(3^{|V|} (|V|+|E|))$ and the second  in  $O(3^{|V|} |V|)$. These are, obviously, important improvements. 

\item Our algorithms distinguish themselves 
from the previously known algorithms in three ways. First, our algorithms neither reduce the sizes of the arenas nor alter the winning conditions.  
This is the feature of the many recursive algorithms that decide M\"uller games. Second, our algorithms avoid the transformation of Müller games into other well-known classes of games, such as safety games. This is contrary to most reduction techniques employed in Müller game decision processes. Thus, in terms of these two aspects, 
our techniques are novel. Third,  our methods  are based on well-established notions such as subarena, traps, and the attractor operator that we already defined in Section \ref{S:basics}. Our algorithms interplay these notions making them clean and simple, and hence easy to implement.  The  central technical concept used in this interplay is the notion of {\em full win}. A player {\bf fully wins} a subarena $\mathcal A(X)$ if the player wins the M\"uller game $\mathcal G(X)$ from any position in $X$. Our algorithms collect all the subarenas that Player 0 fully wins, and then, based on this collection, decide $\mathcal G$.

\item Finally, our methods are universal in the following sense. We can apply our methods directly to decide all other regular games. This is an obvious advantage and  distinction of our algorithms from all the other algorithms that solve regular games. The running times of known algorithms that decide regular games have parameters in them. Using our methods for deciding 
M\"uller games, we show that McNaughton games and colored M\"uller games can be decided in time $O(3^{|V|}\cdot |V|)$. For Rabin (and Streett) games we have the running bound $O((3^{|V|}+k)\cdot |V|^2)$. 
With this, we significantly improve the running bounds of all the known algorithms  when games are large. As an example, we 
improve the known bound for coloured M\"uller games obtained from the breakthrough quasi-polynomial time algorithm from \cite{calude2017deciding}. 
\end{itemize}

The table below summarises our results and compares them to the state of the art.  

\smallskip

\begin{center}
\begin{tabular}{||c|c| c||} 
 \hline
 \  & Best known running times & Our algorithm (s) \\ [0.5ex] 
 \hline\hline
 M\"uller games  & $O( |\Omega|\cdot(|V|+|\Omega|)\cdot |V_0|\log |V_0|  )$ & $O(min\{ |\Omega|\cdot(|V|+|\Omega|)\cdot |V_0|\log |V_0|, \ 3^{|V|}|V|\})$  \\ 
 \hline
 McNaughton games &  $O(|W||E||W|!)$ & \multirow{2}{*}{$O(3^{|V|}|V|)$}   \\
 \cline{1-2}
 Colored M\"uller games & $O(|C|^{5|C|}\cdot |V|^5)$ &  \\
 \hline
 Rabin games & $O(|E| |V|^{k+1}kk!)$ & \multirow{2}{*}{$O((3^{|V|}+2^{|V|}k)\cdot |V|)$, $O((3^{|V|}+k)|V|^2)$} \\
 \cline{1-2}
 Streett games & $O(|V|kk!)$ &  \\ 
 \hline
  KL games & none & $O((3^{|V|}+2^{|V|}t)\cdot |V|)$, $O(3^{|V|}|V|^2)$  \\
 \hline
\end{tabular}
\end{center}

\medskip

Other important comment is this. We mentioned the result by C. Calude, S. Jain, B. Khoussainov, W. Li and F. Stephan stating that  under the Exponential Time Hypothesis 
coloured M\"uller games cannot be decided  in  $2^{o(|C|\cdot \log(|C|))}Poly(|V|)$, where $|C|\leq \sqrt{|V|}$\footnote{In their paper  \cite{calude2017deciding}, C. Calude, S. Jain, B. Khoussainov, W. Li and F. Stephan  claim a misleading statement that it is impossible to decide coloured M\"uller games in time $2^{o(|C|\cdot \log(|C|))}Poly(|V|)$,  where $|C|\leq |V|$.  
However, their proof actually implies that $|C|\leq \sqrt{|V|}$.}. 
Our algorithm shows that when $|C|> \frac{\ln3\cdot n}{LW(\ln 3\cdot n)}$ we can solve coloured M\"uller games most efficiently. Indeed, when the game is large, 
our algorithm runs in $O(3^{|V|}|V|)$ which is $2^{o(|C|\cdot \log(|C|))}Poly(|V|)$. Also, by the mentioned result of  Bj\"orklund, Sandberg and Vorobyov \cite{bjorklund2003fixed}, under the Exponential Time Hypothesis, our results are almost optimal for coloured M\"uller games.  We do not know if there is a better exponential time algorithm that decides coloured M\"uller games, where $|C|$ belongs to the interval 
$(\sqrt{|V|}, \ \frac{\ln3\cdot |V|}{LW(\ln 3\cdot |V|)})$.


\section{Deciding M\"uller Games}\label{S:SMG}

For this section let us fix a M\"uller game $\mathcal G$ played on arena $\mathcal A$. Let  $Win_\sigma(\mathcal{G})$ be the set of all $v$ in $\mathcal{G}$ such that player $\sigma$ wins $\mathcal{G}$ starting from $v$. An important notion will be the following:

\begin{definition}
If $Win_\sigma(\mathcal{G})=V$, then 
player $\sigma$ {\bf fully wins $\mathcal{G}$}. Otherwise, we say that player $\sigma$ {\bf cannot fully win}  $\mathcal{G}$.
\end{definition}
Note that even if Player $\sigma$ cannot fully win $\mathcal G$, there might still be positions $v$ that the player wins $\mathcal G$ starting at $v$. We would like to collect all the subarenas $X$ such that a given player fully wins the game played on $\mathcal G(X)$. Note that even if Player $\sigma$ fully wins $\mathcal G(X)$, this does not imply that $X$ is a $\bar{\sigma}$-trap. Now we start some analysis of subarenas. 

\begin{lemma}\label{L: exists cannot win, exists opponent wins}
    If there exists a $\sigma$-trap $\mathcal{A}(X)$ so that Player $\sigma$ cannot fully win $\mathcal{G}(X)$, then there exists a $\sigma$-trap  $\mathcal{A}(Y)
    $ so that $Y\subseteq X$ and Player $\bar{\sigma}$ fully wins $\mathcal{G}(Y)$.
\end{lemma}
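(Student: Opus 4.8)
The plan is to let $Y$ be the winning region of Player $\bar\sigma$ inside the subgame $\mathcal{G}(X)$ and to check that it has both required properties. Since M\"uller games are determined, the hypothesis that Player $\sigma$ cannot fully win $\mathcal{G}(X)$ says exactly that $Y := Win_{\bar\sigma}(\mathcal{G}(X)) = X \setminus Win_{\sigma}(\mathcal{G}(X))$ is nonempty. I will show that $\mathcal{A}(Y)$ is the desired $\sigma$-trap and that Player $\bar\sigma$ fully wins $\mathcal{G}(Y)$.

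The main tool is prefix-independence of the M\"uller condition: the winner of a play depends only on $\mathsf{Inf}(\rho)$, so prepending a finite segment to a play does not change its winner. This yields the standard ``one move'' principle. If $x \in V_\sigma$ has a successor in $Win_\sigma(\mathcal{G}(X))$, then Player $\sigma$ wins from $x$ (move there and follow the winning strategy), so such an $x$ cannot lie in $Y$; dually, if $x \in V_{\bar\sigma}$ has all of its $X$-successors in $Win_\sigma(\mathcal{G}(X))$, then Player $\sigma$ wins from $x$, so again $x \notin Y$. Reading these contrapositively, every Player $\sigma$ position of $Y$ has all its $X$-successors in $Y$, and every Player $\bar\sigma$ position of $Y$ has at least one $X$-successor in $Y$ (which, by bipartiteness, is a Player $\sigma$ position). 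This is precisely the $\sigma$-trap condition for $Y$ relative to the subarena $\mathcal{A}(X)$.

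To upgrade this to a $\sigma$-trap relative to the full arena $\mathcal{A}$, I use that $X$ is itself a $\sigma$-trap in $\mathcal{A}$: every Player $\sigma$ position in $X$ already has all its $E$-successors inside $X$, so for $x \in Y_\sigma$ the statement ``all $X$-successors lie in $Y$'' is the same as ``$E(x) \subseteq Y$''. The Player $\bar\sigma$ clause needs nothing extra, since an $X$-successor is an $E$-successor. The same two facts also show $\mathcal{A}(Y)$ is a genuine subarena (every vertex keeps an outgoing edge inside $Y$), so $\mathcal{G}(Y)$ is well defined.

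For the full-win claim I argue that a winning strategy of Player $\bar\sigma$ in $\mathcal{G}(X)$, started from any position of $Y$, never leaves $Y$: at a Player $\bar\sigma$ position it cannot move into $Win_\sigma(\mathcal{G}(X))$ without losing, and at a Player $\sigma$ position the token is forced to remain in $Y$ by the trap property just established. Hence every resulting play stays inside $Y$, its $\mathsf{Inf}$ is unchanged when the play is viewed inside $\mathcal{G}(Y)$, and (the condition $\mathsf{Inf}(\rho)\in\Omega$ being the same test) the very same strategy witnesses that Player $\bar\sigma$ wins $\mathcal{G}(Y)$ from every position of $Y$. The only delicate point is the repeated reliance on prefix-independence together with determinacy of subgames, which is what makes ``winning region'' well defined and its complement the opponent's winning region; once those are in hand, the remainder is bookkeeping with the two trap conditions.
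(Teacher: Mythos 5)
Your proof is correct and takes essentially the same approach as the paper: both set $Y = Win_{\bar\sigma}(\mathcal{G}(X))$, observe it is nonempty by determinacy, and check that $\mathcal{A}(Y)$ is a $\sigma$-trap (in $\mathcal{A}(X)$ and in $\mathcal{A}$) that Player $\bar\sigma$ fully wins. The paper states these facts without proof, whereas you fill in the standard prefix-independence and one-move arguments; no gap.
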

\begin{proof}
   Let $\mathcal{A}(X)$ be a $\sigma$-trap that  Player $\sigma$ cannot fully win. Let $Y=Win_{\bar\sigma}(\mathcal{G}(X))$. Then $Y\ne \emptyset$, Player $\bar \sigma$ fully wins $\mathcal{G}(Y)$ and $\mathcal{A}(Y)$ is a $\sigma$-trap in both $\mathcal{G}(X)$ and $\mathcal{G}$.
\end{proof}

\begin{corollary}\label{C: all cannot win, opponent wins all}
     If no $\bar \sigma$-trap $\mathcal{A}(X)$ with $X\subsetneq V$ exists that Player $\sigma$ fully wins, then Player $\bar \sigma$ fully wins all these $\bar \sigma$-traps. 
\end{corollary}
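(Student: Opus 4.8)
The plan is to derive this directly from Lemma~\ref{L: exists cannot win, exists opponent wins} by swapping the roles of the two players and arguing by contraposition. First I would instantiate that lemma with $\sigma$ and $\bar\sigma$ interchanged, which yields: if there exists a $\bar\sigma$-trap $\mathcal{A}(X)$ such that Player $\bar\sigma$ cannot fully win $\mathcal{G}(X)$, then there exists a $\bar\sigma$-trap $\mathcal{A}(Y)$ with $Y\subseteq X$ on which Player $\sigma$ fully wins $\mathcal{G}(Y)$. Since the lemma is symmetric in the two players — its proof merely sets $Y=Win_{\bar\sigma}(\mathcal{G}(X))$ and uses that this set is nonempty precisely when the named player fails to fully win — this swapped version holds verbatim.

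Next I would set up the contrapositive of the corollary. Suppose, toward a contradiction, that Player $\bar\sigma$ does not fully win all proper $\bar\sigma$-traps. Then there is some $\bar\sigma$-trap $\mathcal{A}(X)$ with $X\subsetneq V$ on which Player $\bar\sigma$ cannot fully win $\mathcal{G}(X)$. Applying the swapped lemma to this $X$ produces a $\bar\sigma$-trap $\mathcal{A}(Y)$ with $Y\subseteq X$ such that Player $\sigma$ fully wins $\mathcal{G}(Y)$.

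The only point requiring a moment of care is that the $Y$ so produced is again a \emph{proper} subarena, so that it falls under the scope of the hypothesis. This is immediate: from $Y\subseteq X$ and $X\subsetneq V$ we get $Y\subsetneq V$, hence $\mathcal{A}(Y)$ is a $\bar\sigma$-trap with $Y\subsetneq V$ that Player $\sigma$ fully wins, contradicting the assumption that no such proper $\bar\sigma$-trap exists. This completes the contrapositive. I do not anticipate any genuine obstacle: the statement is a near-mechanical corollary of Lemma~\ref{L: exists cannot win, exists opponent wins}, and the only things to watch are the bookkeeping of which player is $\sigma$ and which is $\bar\sigma$, and the propagation of the strict-inclusion condition from $X$ to $Y$.
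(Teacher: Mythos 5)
Your proposal is correct and matches the paper's own proof: the paper likewise assumes a proper $\bar\sigma$-trap $\mathcal{A}(Y)$ that Player $\bar\sigma$ cannot fully win, applies Lemma~\ref{L: exists cannot win, exists opponent wins} with the players' roles swapped to obtain a $\bar\sigma$-trap $\mathcal{A}(Z)$, $Z\subseteq Y$, fully won by Player $\sigma$, and derives a contradiction. Your explicit check that $Z\subseteq Y\subsetneq V$ keeps the trap proper is a detail the paper leaves implicit, but the argument is the same.
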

\begin{proof}
    Assume that there is a $\bar\sigma$-trap $\mathcal{A}(Y)$ with $Y\subsetneq V$ such that Player $\bar\sigma$ cannot fully win $\mathcal{G}(Y)$. Then by Lemma \ref{L: exists cannot win, exists opponent wins}, there exists a $\bar\sigma$-trap $\mathcal{A}(Z)$ with $Z\subseteq Y$ so that Player $\sigma$ fully wins $\mathcal{G}(Z)$. This is a contradiction.
\end{proof}

\begin{lemma}\label{L: 0 no 1-trap}
    If $X\notin \Omega$ and there does not exist a 1-trap $\mathcal{A}(X)$ with $X\subsetneq V$ in $\mathcal{G}$, then Player 1 fully wins $\mathcal{G}$.
\end{lemma}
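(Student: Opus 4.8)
The plan is to read the hypotheses in their natural way: the full vertex set satisfies $V \notin \Omega$, and the whole arena $\mathcal{A} = \mathcal{A}(V)$ admits no nonempty proper $1$-trap (so the only $1$-trap present is $\mathcal{A}$ itself). Under these assumptions I would exhibit, for every starting position, an explicit winning strategy for Player $1$; producing such a strategy uniformly in the start vertex is exactly what ``Player $1$ fully wins $\mathcal{G}$'' demands.

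First I would extract the decisive consequence of the trap hypothesis. For each vertex $v \in V$ consider $Attr_1(\{v\}, \mathcal{A})$ together with its complement $Y_v = V \setminus Attr_1(\{v\}, \mathcal{A})$. By the fact recorded in Section~\ref{S:basics}, $Y_v$ is a $1$-trap for every choice of target. Since $v \in Attr_1(\{v\}, \mathcal{A})$ we have $Y_v \subsetneq V$, so $Y_v$ is a \emph{proper} $1$-trap; by hypothesis the only way this avoids contradicting the assumption ``no proper $1$-trap'' is $Y_v = \emptyset$. Hence $Attr_1(\{v\}, \mathcal{A}) = V$ for every $v \in V$, that is, from any position Player $1$ can force the play to visit any prescribed vertex.

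Next I would assemble Player $1$'s strategy by chaining attractor strategies round-robin. Fix an enumeration $v_1, \ldots, v_n$ of $V$ and let Player $1$ carry a cyclic counter $j$: while the counter is $j$, Player $1$ follows the positional attractor strategy witnessing $Attr_1(\{v_j\}, \mathcal{A}) = V$ until $v_j$ is reached, then advances $j$ to $j+1 \pmod n$. Because each attractor equals all of $V$, every phase reaches its target in finitely many moves from wherever the play currently sits, so the induced play $\rho$ is infinite and runs through all phases infinitely often. Consequently each $v_j$ appears infinitely often, giving $\mathsf{Inf}(\rho) = V$. As $V \notin \Omega$, Player $1$ wins $\rho$; and since nothing in this construction depended on the initial position, Player $1$ fully wins $\mathcal{G}$.

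The routine-but-delicate step I expect to need the most care is the last one: checking that concatenating finitely many finite attractor strategies really yields a single (finite-memory) strategy all of whose plays have infinity set exactly $V$. The two points to state explicitly are that each attractor phase terminates (so the play never stalls inside a phase) and that termination of one phase hands control cleanly to the next, so the counter cycles forever and no $v_j$ is eventually abandoned. I would make both observations precise rather than leave them implicit, since they are what upgrade ``each vertex is reachable'' into ``each vertex is visited infinitely often''.
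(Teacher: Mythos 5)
Your proof is correct and follows essentially the same route as the paper's: you deduce $Attr_1(\{v\},\mathcal{A})=V$ for every $v$ from the absence of a proper $1$-trap (the paper likewise notes that otherwise $V\setminus Attr_1(\{v\},\mathcal{A})$ would be such a trap), and then use the same round-robin attractor strategy to make $\mathsf{Inf}(\rho)=V\notin\Omega$. Your added care about phase termination and the correct reading of the hypothesis as $V\notin\Omega$ only make explicit what the paper leaves implicit.
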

\begin{proof}
    Note that for all $v\in V$ we have 
    $Attr_1(\{v\},\mathcal{A})=V$. Otherwise, for some $v\in V$ we will have a 1-trap $\mathcal{A}(X)$ with $X=V \setminus Attr_1(\{v\},\mathcal{A})$.  Now we construct a winning strategy for Player 1 as follows. Let $v_0,v_1,\ldots,v_{k-1}$ be all positions in $\mathcal G$. Initially set $i=0$. Player 1 forces the token to $v_i$ and once the token arrives at $v_i$, set $i=i+1\mod k$. With this strategy, the token is moved through each position infinitely often. Since 
    $V\notin \Omega$, Player 1 fully wins $\mathcal{G}$.
\end{proof}

Now we characterise all subarenas $X$ that are fully won by the players. Of course, our characterization will be based on whether or not $X\in \Omega$. 

\begin{lemma}\label{L:0 X in Omega}
    Let  $X\in \Omega$ be a subarena in $\mathcal{G}$. Player 0 fully wins $\mathcal{G}(X)$ if and only if for all 0-traps $\mathcal{A}(Y)$ with $Y\subsetneq X$ in $\mathcal{G}(X)$, Player 0 fully wins $\mathcal{G}(Y)$.
\end{lemma}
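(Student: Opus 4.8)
The plan is to prove the two implications separately, with the reverse implication carrying essentially all of the content.

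For the forward direction, assume Player 0 fully wins $\mathcal{G}(X)$ and fix an arbitrary $0$-trap $\mathcal{A}(Y)$ with $Y \subsetneq X$ in $\mathcal{G}(X)$. For a start position $y \in Y$, take a winning strategy $\tau$ for Player 0 from $y$ in $\mathcal{G}(X)$. Because $Y$ is a $0$-trap, every outgoing edge of a Player 0 position of $Y$ stays inside $Y$, so the moves dictated by $\tau$ on histories that remain in $Y$ never leave $Y$; hence $\tau$ restricts to a legitimate strategy in the subgame $\mathcal{G}(Y)$. Any play in $\mathcal{G}(Y)$ consistent with this restriction is also a play in $\mathcal{G}(X)$ consistent with $\tau$ (since $Y \subseteq X$ and all moves lie in $Y$), so its infinity set belongs to $\Omega$. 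Thus Player 0 wins $\mathcal{G}(Y)$ from every starting position, i.e. fully wins $\mathcal{G}(Y)$. Note this direction does not use $X \in \Omega$.

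For the reverse direction, assume Player 0 fully wins $\mathcal{G}(Y)$ for every $0$-trap $Y \subsetneq X$ of $\mathcal{G}(X)$, and I would build, for an arbitrary start position, a winning strategy for Player 0 in $\mathcal{G}(X)$. Enumerate $X = \{v_0, \ldots, v_{k-1}\}$ and let Player 0 maintain a cyclic target pointer $j$, initially $0$. For the current target $v_j$ set $Y_j = X \setminus Attr_0(\{v_j\}, \mathcal{A}(X))$; by the fact recorded in Section~\ref{S:basics} this $Y_j$ is a $0$-trap, and it is a proper subset of $X$ since $v_j \in Attr_0(\{v_j\}, \mathcal{A}(X))$. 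Player 0 plays as follows: while the current position lies in $Attr_0(\{v_j\}, \mathcal{A}(X))$, Player 0 follows the attractor strategy toward $v_j$, which keeps the play inside the attractor and reaches $v_j$ in finitely many steps; upon reaching $v_j$, increment the pointer to $j+1 \bmod k$. While the current position lies in $Y_j$, Player 0 instead follows a fixed winning strategy for $\mathcal{G}(Y_j)$, available by hypothesis when $Y_j \neq \emptyset$, started afresh at the moment the play most recently enters $Y_j$. Since $Y_j$ is a $0$-trap and the sub-strategy stays in $Y_j$, the play can leave $Y_j$ only by a move of Player 1 into $Attr_0(\{v_j\}, \mathcal{A}(X))$, after which Player 0 resumes the attractor phase.

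Correctness then follows from a dichotomy on the behaviour of the target pointer along the resulting play $\rho$. If the pointer advances infinitely often, then each of $v_0, \ldots, v_{k-1}$ is visited infinitely often, so $\mathsf{Inf}(\rho) = X$, and since $X \in \Omega$ Player 0 wins. Otherwise the pointer stabilises at some value $j$; by the choice of strategy, failing to reach $v_j$ ever again forces the play to remain in $Y_j$ from that moment on, because any entry into $Attr_0(\{v_j\}, \mathcal{A}(X))$ would let the attractor strategy reach $v_j$. Hence the tail of $\rho$ is an infinite play in $\mathcal{G}(Y_j)$ consistent with the winning sub-strategy started at the last entry into $Y_j$, so $\mathsf{Inf}(\rho) \in \Omega$ and again Player 0 wins. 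As the start position was arbitrary, Player 0 fully wins $\mathcal{G}(X)$.

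The main obstacle is making the reverse direction's combined strategy sound: Player 0 repeatedly switches between the attractor strategy and the sub-game strategies on the traps $Y_j$, and a priori the interleaving could scramble the infinity set. The point that resolves this is that the switching is governed entirely by the monotone target pointer, so exactly one of two limiting regimes occurs: the pointer advances forever, pinning $\mathsf{Inf}(\rho)$ to all of $X \in \Omega$, or it freezes, confining the tail of the play to a single trap $Y_j$ on which Player 0 is assumed to win. I would take care to specify that the $\mathcal{G}(Y_j)$ sub-strategy is reinitialised on each entry into $Y_j$, so that in the freezing case the final sojourn is a genuine winning play of $\mathcal{G}(Y_j)$; the finite-state determinacy of these subgames, guaranteed by the cited results of McNaughton \cite{mcnaughton1993infinite} and Zielonka \cite{zielonka1998infinite}, makes this bookkeeping legitimate.
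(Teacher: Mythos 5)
Your proof is correct, and its reverse direction is essentially the paper's own argument: the cyclic target pointer over $X=\{v_0,\ldots,v_{k-1}\}$, attractor phases toward the current target $v_j$, winning sub-strategies on the $0$-traps $X\setminus Attr_0(\{v_j\},\mathcal{A}(X))$, and the concluding dichotomy (pointer advances forever, forcing $\mathsf{Inf}(\rho)=X\in\Omega$; or it freezes, confining the tail to a fully won trap) all match Lemma~\ref{L:0 X in Omega}'s proof in the paper. Where you genuinely diverge is the forward direction. The paper argues by contraposition: from a $0$-trap $Y\subsetneq X$ that Player 0 does not fully win, it invokes Lemma~\ref{L: exists cannot win, exists opponent wins} to extract a smaller $0$-trap $Z$ that Player 1 fully wins, and then has Player 1 confine the token to $Z$; note that this route implicitly uses determinacy of M\"uller games (to conclude $Win_1(\mathcal{G}(Y))\neq\emptyset$ from ``Player 0 cannot fully win''). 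You instead give a direct restriction argument: since Player 0's positions in a $0$-trap $Y$ have all their edges inside $Y$, any winning strategy for $\mathcal{G}(X)$ restricts to a legal strategy of $\mathcal{G}(Y)$ whose consistent plays are consistent plays of $\mathcal{G}(X)$ remaining in $Y$. This is more elementary — it needs no determinacy and no auxiliary lemma — and, as you note, it shows the forward implication holds regardless of whether $X\in\Omega$, which the paper's symmetric treatment (Corollary~\ref{C:1 X not in Omega}) also exploits. One small trim: your closing appeal to finite-state determinacy via McNaughton and Zielonka is unnecessary for the reinitialisation bookkeeping; since Player 0 fully wins each $\mathcal{G}(Y_j)$, a winning strategy exists from \emph{every} position of $Y_j$, so restarting a (possibly history-dependent) winning strategy at each entry point is already sound without any memory bound.
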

\begin{proof}
    Assume that there exists a 0-trap $\mathcal{A}(Y)$ with $Y\subsetneq X$ in $\mathcal{G}(X)$ so that Player 0 cannot fully win $\mathcal{G}(Y)$. By Lemma \ref{L: exists cannot win, exists opponent wins}, there exists a 0-trap $\mathcal{A}(Z)$ in $\mathcal{G}(X)$ so that Player 1 fully wins $\mathcal{G}(Z)$. Therefore, since $Z$ is a $0$-trap, Player 1 wins the game $\mathcal G$   starting from any position $v\in Z$. Indeed, Player $1$ keeps the token inside $\mathcal{A}(Z)$, and follows the winning strategy in $\mathcal{G}(Z)$. Therefore, Player 0 cannot fully win $\mathcal{G}(X)$.

    Assume that for all 0-traps $\mathcal{A}(Y)$ with $Y\subsetneq X$ in $\mathcal{G}(X)$, Player 0 fully wins $\mathcal{G}(Y)$. We construct the following winning strategy for Player 0 in $\mathcal{G}(X)$. Let $X=\{v_0,v_1,\ldots,v_{k-1}\}$ and $i$ initially be 0.
    \begin{itemize}
        \item If the token is in $Attr_{0}(\{v_i\},\mathcal{A}(X))$, then Player 0 forces the token to $v_i$ and once the token arrives at $v_i$, sets $i=i+1 \mod k$. Otherwise,
        \item since $\mathcal{A}(X\setminus Attr_{0}(\{v_i\},\mathcal{A}(X)))$ is a 0-trap in $\mathcal{G}(X)$, Player 0 uses a winning strategy in  $\mathcal{G}(X\setminus Attr_{0}(\{v_i\},\mathcal{A}(X)))$. 
    \end{itemize}
Consider any play consistent with the strategy described.     If the token finally stays in $\mathcal{A}(X\setminus Attr_{0}(\{v_i\},\mathcal{A}(X)))$  for some $i$, then Player 0 wins the game. Otherwise, the token must be moved through every vertex in $X$ infinitely often. Since $X\in \Omega$,  Player 0 wins. This implies that Player 0 fully wins $\mathcal{G}(X)$. 
\end{proof}

\begin{corollary}\label{C:1 X not in Omega}
    Let  $X\in 2^{V}\setminus \Omega$ be a subarena in $\mathcal{G}$. Player 1 fully wins $\mathcal{G}(X)$ if and only if for all 1-traps $\mathcal{A}(Y)$ with $Y\subsetneq X$ in $\mathcal{G}(X)$, Player 1 fully wins $\mathcal{G}(Y)$.
\end{corollary}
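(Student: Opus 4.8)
The plan is to obtain Corollary~\ref{C:1 X not in Omega} as the exact dual of Lemma~\ref{L:0 X in Omega} under the symmetry of Müller games, namely the passage from $(\mathcal{A},\Omega)$ to its counterpart $(\mathcal{A},2^V\setminus\Omega)$ with the roles of the two players interchanged. The point of this symmetry is that the objective $\mathsf{Inf}(\rho)\in 2^V\setminus\Omega$ pursued by Player $1$ in $\mathcal{G}$ is literally the objective pursued by Player $0$ in the complemented game. The first step is therefore to record the three translations induced by this swap: (i) Player $1$ fully wins $\mathcal{G}(X)$ if and only if, in the complemented game with $0$ and $1$ exchanged, Player $0$ fully wins $\mathcal{G}(X)$; (ii) since a $\sigma$-trap is defined on the arena alone and makes no reference to $\Omega$, every $1$-trap of $\mathcal{A}$ becomes a $0$-trap after the swap and conversely; and (iii) the hypothesis $X\in 2^V\setminus\Omega$ becomes exactly the membership hypothesis required by Lemma~\ref{L:0 X in Omega} for the complemented winning collection. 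Once these correspondences are in place, applying Lemma~\ref{L:0 X in Omega} to the complemented game and reading the conclusion back through the swap yields the corollary verbatim.

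To make the reduction rigorous rather than merely invoking symmetry, I would in parallel verify the two implications directly, mirroring the proof of Lemma~\ref{L:0 X in Omega} with $0$ and $1$ interchanged and $Attr_0$ replaced by $Attr_1$. For the forward direction I would argue by contraposition: if some $1$-trap $\mathcal{A}(Y)$ with $Y\subsetneq X$ is one that Player $1$ cannot fully win, then Lemma~\ref{L: exists cannot win, exists opponent wins} applied with $\sigma=1$ produces a $1$-trap $\mathcal{A}(Z)$ with $Z\subseteq Y$ on which Player $0$ fully wins $\mathcal{G}(Z)$. Because $Z$ is a $1$-trap, Player $0$ can confine the token to $\mathcal{A}(Z)$ and follow that winning strategy, hence wins $\mathcal{G}$ from every $v\in Z$, so Player $1$ cannot fully win $\mathcal{G}(X)$. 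This is the exact analogue of the first half of the proof of Lemma~\ref{L:0 X in Omega}, and it is clean precisely because Lemma~\ref{L: exists cannot win, exists opponent wins} is already stated for an arbitrary player $\sigma$.

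For the backward direction I would build the winning strategy for Player $1$ in $\mathcal{G}(X)$ by the same round-robin scheme used for Player $0$: enumerating $X=\{v_0,\dots,v_{k-1}\}$, Player $1$ repeatedly forces the token to the current target $v_i$ whenever it lies in $Attr_1(\{v_i\},\mathcal{A}(X))$, advancing $i$ modulo $k$ upon arrival, and otherwise plays a winning strategy on the $1$-trap $\mathcal{A}(X\setminus Attr_1(\{v_i\},\mathcal{A}(X)))$, which exists by hypothesis. Any consistent play either is eventually confined to one of these $1$-traps, where Player $1$ wins, or visits every vertex of $X$ infinitely often, so that $\mathsf{Inf}(\rho)=X$; and here the hypothesis $X\in 2^V\setminus\Omega$ does the work, since $X\notin\Omega$ forces Player $1$ to win that play. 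Hence Player $1$ fully wins $\mathcal{G}(X)$.

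I do not expect a serious obstacle, since the argument is essentially routine once the dualization is set up. The one point that must be handled with care is precisely the legitimacy of the symmetry: I must ensure that traps are treated as arena-level objects, so that a $1$-trap is genuinely the same subarena whether viewed in $\mathcal{G}$ or in the complemented game, and that the membership condition flips correctly from $X\in\Omega$ in the lemma to $X\notin\Omega$ here, since this is exactly the place where the value $\mathsf{Inf}(\rho)=X$ is evaluated against the winning collection. Provided these correspondences are stated explicitly, the proof reduces to citing Lemma~\ref{L:0 X in Omega}, or equivalently transcribing its proof with the roles of the players exchanged.
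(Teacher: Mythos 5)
Your proposal is correct and follows essentially the same route as the paper, which proves the corollary in one line by invoking the symmetry of M\"uller games (passing to $(\mathcal{A},2^V\setminus\Omega)$ with the players' roles exchanged) together with Lemma~\ref{L:0 X in Omega}. Your additional direct transcription of the lemma's proof with $0$ and $1$ swapped and $Attr_0$ replaced by $Attr_1$ is a sound, more explicit rendering of that same dualization, not a different argument.
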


\begin{proof}
The proof follows from the symmetry of M\"uller games, and
Lemma \ref{L:0 X in Omega} above. 
\end{proof}

Lemma \ref{L:0 X in Omega} considers the case when $X\in \Omega$ forms a subarena, and provides necessary and sufficient conditions for Player 0 to fully win the game $\mathcal G(X)$. The next lemma considers the case when $X$ is a subarena but $X$ is a winning condition for Player 1, that is, $X \in 2^V\setminus \Omega$. The lemma provides necessary and sufficient conditions for Player 0 to fully win the game $\mathcal G(X)$. 

\begin{lemma}\label{L: 0 X notin Omega}
    Let  $X\in 2^{V}\setminus \Omega$ be a subarena in $\mathcal{G}$. Player 0 fully wins $\mathcal{G}(X)$ if and only if there exists a 1-trap $\mathcal{A}(Y)$ with $Y\subsetneq X$ in $\mathcal{G}(X)$ such that the following two conditions are satisfied:
    \begin{enumerate}
    \item Player 0 fully wins $\mathcal{G}(Y)$, and 
    \item $Attr_0(Y, \mathcal{A}(X))=X$ or Player 0 fully wins $\mathcal{G}(X\setminus Attr_0(Y, \mathcal{A}(X)))$. 
    \end{enumerate}
\end{lemma}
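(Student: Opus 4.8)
The statement is a biconditional, so I would prove the two implications separately. The conceptual point that distinguishes this lemma from Lemma~\ref{L:0 X in Omega} is that here $X \notin \Omega$, so Player~0 can never win by driving the play so that $\mathsf{Inf}(\rho) = X$; instead Player~0 must eventually confine the play to a proper winning sub-region, and the $1$-trap $Y$ is precisely the region into which the play is funnelled, via the attractor $Attr_0(Y,\mathcal A(X))$, and then held.

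For the direction ($\Leftarrow$) I would build an explicit strategy for Player~0 on $X$. Write $A = Attr_0(Y, \mathcal A(X))$. Whenever the token lies in $A$, Player~0 plays the attractor strategy toward $Y$; since $A$ is a $0$-attractor, Player~1 cannot leave $A$, so the token reaches $Y$ in finitely many steps. Once in $Y$, Player~0 switches to a winning strategy of $\mathcal G(Y)$ and, because $Y$ is a $1$-trap, can keep the token in $Y$ forever; as Player~0 fully wins $\mathcal G(Y)$ by condition~(1), the tail of the play satisfies $\mathsf{Inf}(\rho) \in \Omega$. If instead the token never enters $A$, then $A \neq X$ and the token stays forever in the $0$-trap $X \setminus A$, where I invoke the assumed winning strategy of $\mathcal G(X\setminus A)$ from condition~(2). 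A case analysis on whether the play ever reaches $A$ then shows Player~0 wins from every start position, i.e. Player~0 fully wins $\mathcal G(X)$. Note that this direction does not use $X \notin \Omega$.

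For the direction ($\Rightarrow$), assume Player~0 fully wins $\mathcal G(X)$. I would first record a general fact: if $T \subseteq X$ is any $0$-trap of $\mathcal A(X)$, then Player~0 fully wins $\mathcal G(T)$, since a winning strategy of $\mathcal G(X)$ is trapped in $T$ at Player~0 positions and hence keeps every $\mathcal G(T)$-play inside $T$, where such a play is also a winning $\mathcal G(X)$-play. Granting this, condition~(2) is automatic for \emph{any} admissible $Y$: either $Attr_0(Y,\mathcal A(X)) = X$, or $X \setminus Attr_0(Y,\mathcal A(X))$ is a $0$-trap and is therefore fully won by Player~0. So the whole content of this direction is to produce one $1$-trap $Y \subsetneq X$ with Player~0 fully winning $\mathcal G(Y)$. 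I would do this by contradiction using Corollary~\ref{C: all cannot win, opponent wins all}: if no $1$-trap $Y \subsetneq X$ is fully won by Player~0, then Player~1 fully wins every $1$-trap $Y \subsetneq X$ of $\mathcal G(X)$, and I then construct a winning strategy for Player~1 on all of $X$ that contradicts Player~0 fully winning. Mimicking Lemma~\ref{L: 0 no 1-trap}, I enumerate $X = \{v_0,\dots,v_{k-1}\}$ and cycle a target index $i$: when the token lies in $Attr_1(\{v_i\},\mathcal A(X))$, Player~1 forces it to $v_i$ and increments $i$; otherwise the token lies in the $1$-trap $T_i = X \setminus Attr_1(\{v_i\},\mathcal A(X)) \subsetneq X$, where Player~1 plays its full winning strategy for $\mathcal G(T_i)$. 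In any resulting play, either the index increments infinitely often, so every $v_j$ is visited infinitely often and $\mathsf{Inf}(\rho) = X \notin \Omega$, or the play is eventually confined to a single $T_i$ and Player~1 wins there; either way Player~1 fully wins $\mathcal G(X)$, the desired contradiction.

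The main obstacle is this last construction: verifying that Player~1's cycling-plus-trap strategy is genuinely winning. The delicate points are that Player~1 cannot be prevented from reaching its current target once the token enters the corresponding $1$-attractor (Player~0 cannot leave a $1$-attractor), and that the dichotomy ``the index increments infinitely often'' versus ``the play is trapped in some $T_i$'' is exhaustive. This is exactly where the hypothesis $X \notin \Omega$ is essential, since it is the condition $\mathsf{Inf}(\rho)=X \notin \Omega$ that renders the fully-cycling plays losing for Player~0.
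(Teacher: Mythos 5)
Your proof is correct, and while your $(\Leftarrow)$ direction coincides with the paper's (attractor mode toward $Y$ inside $A=Attr_0(Y,\mathcal{A}(X))$, the $\mathcal{G}(X\setminus A)$ strategy outside, and the same case split on whether the play ever enters $A$), your $(\Rightarrow)$ direction is organized differently. The paper argues by a three-case analysis: if no $1$-trap $Y\subsetneq X$ exists it invokes Lemma~\ref{L: 0 no 1-trap}; if no such trap is fully won by Player~0 it invokes Corollary~\ref{C: all cannot win, opponent wins all} together with Corollary~\ref{C:1 X not in Omega}; and if every fully won $1$-trap $Y$ violates condition~(2), it applies Lemma~\ref{L: exists cannot win, exists opponent wins} to the $0$-trap $X\setminus Attr_0(Y,\mathcal{A}(X))$ to extract a $0$-trap $Z$ fully won by Player~1, contradicting full win of $\mathcal{G}(X)$. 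You replace the third case by the restriction fact that if Player~0 fully wins $\mathcal{G}(X)$ then Player~0 fully wins $\mathcal{G}(T)$ for every $0$-trap $T$ of $\mathcal{A}(X)$ --- valid, since Player~0's moves from $T$ cannot leave $T$ and every $\mathcal{G}(T)$-play consistent with the restricted strategy is a consistent $\mathcal{G}(X)$-play with the same infinity set --- which makes condition~(2) automatic and exposes that it is redundant in this direction; this fact is essentially the contrapositive of the paper's Lemma~\ref{L: exists cannot win, exists opponent wins}-plus-trapping argument stated positively. Your inline cycling construction for Player~1 (targets $v_i$, attractors $Attr_1(\{v_i\},\mathcal{A}(X))$, full-win strategies on the $1$-traps $T_i$) is exactly the Player-1 dual of the strategy in the proof of Lemma~\ref{L:0 X in Omega}, and it handles the paper's first two cases uniformly, including the vacuous case where all $T_i$ are empty; you could shorten it by simply citing Corollary~\ref{C:1 X not in Omega}, whose right-hand side holds vacuously when no $1$-trap exists. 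What your route buys is a cleaner two-step forward direction and the explicit observation that only the existence of one fully won $1$-trap needs proving; the cost is re-deriving a dual statement the paper already has. One shared point of informality: in the trapped case the token may enter and leave $T_i$ several times before settling, so Player~1 should restart its $\mathcal{G}(T_i)$-strategy on each re-entry and argue about the tail after the last index increment --- your dichotomy implicitly does this, at the same level of rigor as the paper's own strategy constructions.
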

\begin{proof}
    Assume that there exists a 1-trap $\mathcal{A}(Y)$ with $Y\subsetneq X$  in $\mathcal{G}(X)$ such that Player 0 fully wins $\mathcal{G}(Y)$ and $Attr_0(Y, \mathcal{A}(X))=X$.
    Now define the following strategy for Player 0.
    Starting at any position in $X$, first force the play into the set $Y$. As soon as the token is placed in $Y$, use the winning strategy to fully win $\mathcal{A}(Y)$. Since $Y$ is a 1-trap, Player 1 fully wins $\mathcal G(X)$.
    
    Now consider the next case, where we assume that there exists a 1-trap $\mathcal{A}(Y)$ with $Y\subsetneq X$  in $\mathcal{G}(X)$ such that  Player 0 fully wins both games: $\mathcal{G}(Y)$ and $\mathcal{G}(X\setminus Attr_0(Y, \mathcal{A}(X)))$.  Below we  construct the following winning strategy for Player 0 that guarantees that the player fully wins $\mathcal G(X)$.
    \begin{itemize}
        \item If the token is in $\mathcal{A}(Attr_0(Y, \mathcal{A}(X)))$, then Player 0 forces the token into $\mathcal{A}(Y)$ and then follows a winning strategy that fully wins $\mathcal{G}(Y)$.
        \item Otherwise, Player 0 follows the winning strategy in $\mathcal{G}(X\setminus Attr_0(Y, \mathcal{A}(X)))$.
    \end{itemize}
    Consider any play $\rho$ consistent with the strategy. If the token in the play is placed into $\mathcal{A}(Attr_0(Y, \mathcal{A}(X)))$, then Player 0 wins just like in the previous case. Otherwise, the token along this play will never move into $\mathcal{A}(Attr_0(Y, \mathcal{A}(X)))$. Since Player 0 follows a winning strategy in
    $\mathcal{G}(X\setminus Attr_0(Y, \mathcal{A}(X)))$, the play must be won by Player 0. Therefore, Player 0 fully wins the game in $\mathcal{G}(X)$.  So, this proves one direction of the lemma.

    Now we prove the other direction of the lemma. We assume that Player 0 fully wins the game $\mathcal G(X)$. We need to consider several cases.

    {\em Case 1:} Assume that there is no 1-trap $\mathcal{A}(Y)$ with $Y\subsetneq X$ in $\mathcal{G}(X)$.  By Lemma \ref{L: 0 no 1-trap}, Player 1 fully wins $\mathcal{G}(X)$. This obviously contradicts with our assumption. 

    {\em Case 2:} Assume that for all 1-traps $\mathcal{A}(Y)$ with $Y\subsetneq X$, Player 0 cannot fully win $\mathcal{G}(Y)$. Then by Corollary \ref{C: all cannot win, opponent wins all}, Player 1 fully wins all these $\mathcal{G}(Y)$, and by Corollary \ref{C:1 X not in Omega}, Player 1 fully wins $\mathcal{G}(X)$. 

    {\em Case 3:} Assume that for all 1-traps $\mathcal{A}(Y)$ with $Y\subsetneq X$  in $\mathcal{G}(X)$, if Player 0 fully wins $\mathcal{G}(Y)$ then $Attr_0(Y,\mathcal{A}(X))\ne X$ and Player 0 cannot fully win $\mathcal{G}(X\setminus Attr_0(Y, \mathcal{A}(X)))$. Let $\mathcal{A}(Y)$ be any of such 1-traps. Since $\mathcal{A}(X\setminus Attr_0(Y, \mathcal{A}(X)))$ is a 0-trap in $\mathcal{G}(X)$, by Lemma \ref{L: exists cannot win, exists opponent wins}, there exists a 0-trap $\mathcal{A}(Z)$ in $\mathcal{G}(X)$ so that Player 1 fully wins $\mathcal{G}(Z)$. By forcing the token in $\mathcal{A}(Z)$ and following the winning strategy in $\mathcal{G}(Z)$, Player 1 wins $\mathcal{G}(X)$ starting from any $v$ in $Z$. Therefore, Player 0 cannot fully win $\mathcal{G}(X)$.
\end{proof}

Let $\mathcal{G}=(\mathcal{A},\Omega)$ be a M\"uller game where $V=\{v_1,v_2,\ldots,v_{n}\}$. We assign a $n$-bit binary number $i$ to each non-empty pseudo-arena $\mathcal{A}(S_i)$ in $\mathcal{G}$ so that $S_i=\{v_j\mid \text{the $j$th bit of $i$ is 1}\}$. We partition all subgames $\mathcal{G}(S_i)$ into two sets $P=\{S_i\mid i\in [1,2^n-1] \text{ and Player 0 fully wins }\mathcal{G}(S_i)\}$ and $Q=\{S_i\mid i\in [1,2^n-1]\text{ and Player 0 cannot fully win }\mathcal{G}(S_i)\}$ with the following algorithm.

\begin{figure}[H]
    \centering 
    \scriptsize
    \begin{tabular}{l}
        \textbf{Input}: A M\"uller game $\mathcal{G}=(\mathcal{A},\Omega)$\\
        \textbf{Output}: The partitioned sets $P$ and $Q$.\\
        $P\leftarrow \emptyset$, $Q\leftarrow \emptyset$;\\
        \textbf{for} $i=1$ to $2^n-1$ \textbf{do}\\
        \hspace*{4mm}$S_i\leftarrow \{v_j\mid \text{the $j$th bit of $i$ is 1}\}$;\\ \hspace*{4mm}$is\_win=$\text{false};\\
        \hspace*{4mm}\textbf{if} $\mathcal{A}(S_i)$ is not an arena \textbf{then}\\
        \hspace*{8mm}\textbf{break};\\
        \hspace*{4mm}\textbf{end}\\
        \hspace*{4mm}\textbf{if} $S_i\in \Omega$ \textbf{then}\\
        \hspace*{8mm} $is\_win\leftarrow\text{true}$\\
        \hspace*{8mm}\textbf{for} $S_j\subsetneq S_i$ \textbf{do}\text{\hspace{70mm}$\longrightarrow$ Lemma \ref{L:0 X in Omega}}\\
        \hspace*{12mm}\textbf{if} $\mathcal{A}(S_j)$ is a 0-trap in $\mathcal{G}(S_i)$ \text{and} $S_j\in Q$ \textbf{then}\\
        \hspace*{16mm}$is\_win\leftarrow$\text{false};\\
        \hspace*{16mm}\textbf{break};\\
        \hspace*{12mm}\textbf{end}\\
        \hspace*{8mm}\textbf{end}\\
        \hspace*{4mm}\textbf{else}\\
        \hspace*{8mm}\textbf{for} $S_j\subsetneq S_i$ \textbf{do}\text{\hspace{70mm}$\longrightarrow$ Lemma \ref{L: 0 X notin Omega}}\\
        \hspace*{12mm}\textbf{if} $\mathcal{A}(S_j)$ is a 1-trap in $\mathcal{G}(S_i)$ \text{and} $S_j\in P$ \textbf{then}\\
        \hspace*{16mm}\textbf{if} $Attr_0(S_j, \mathcal{A}(S_i))=S_i$ or $S_i\setminus Attr_0(S_j, \mathcal{A}(S_i))\in P$ \textbf{then}\\
        \hspace*{20mm}$is\_win\leftarrow$\text{true};\\
        \hspace*{20mm}\textbf{break};\\
        \hspace*{16mm}\textbf{end}\\
        \hspace*{12mm}\textbf{end}\\
        \hspace*{8mm}\textbf{end}\\
        \hspace*{4mm}\textbf{end}\\
        \hspace*{4mm}\textbf{if} $is\_win=$\text{true} \textbf{then}\\
        \hspace*{8mm}$P\leftarrow P\cup \{S_i\}$;\\
        \hspace*{4mm}\textbf{else}\\
        \hspace*{8mm}$Q\leftarrow Q\cup \{S_i\}$;\\
        \hspace*{4mm}\textbf{end}\\
        \textbf{end}\\
        \textbf{return} $P$ and  $Q$
    \end{tabular}
    \caption{Algorithm 1 for partitioning subgames of a M\"uller game}
    \label{F:Partition Muller game}
\end{figure}

We now explain the algorithm.  
The algorithm, given a M\"uller game $\mathcal{G}$ as input, and returns the collections $P$ and $Q$:
\begin{itemize}
    \item $P=\{S_i\mid i\in [1,2^n-1] \text{ and Player 0 fully wins }\mathcal{G}(S_i)\}$, and 
    \item $Q=\{S_i\mid i\in [1,2^n-1]\text{ and Player 0 cannot fully win }\mathcal{G}(S_i)\}$.
\end{itemize}
At each iteration, the algorithm either keeps both $P$ and $Q$ intact or extends either $P$ or $Q$. According to the algorithm, if $\mathcal{A}(S_i)$ is not an arena, then $S_i$ is disregarded. If $\mathcal{A}(S_i)$ is an arena, then by using Lemmas \ref{L:0 X in Omega} and 
\ref{L: 0 X notin Omega}, we put $\mathcal{A}(S_i)$ either into $P$ or into $Q$. 
\begin{enumerate}
    \item If $S_i\in \Omega$, then:
        \begin{enumerate}
            \item If there exists a 0-trap $\mathcal{A}(S_j)$ in $\mathcal{G}(S_i)$ so that $S_j\in Q$ then $S_i$ is added to $Q$.
            \item Otherwise, $S_i$ is added to $P$.
        \end{enumerate}
        \item Otherwise:
        \begin{enumerate}
            \item If there exists a 1-trap $\mathcal{A}(S_j)$ in $\mathcal{G}(S_i)$ so that $Attr_0(S_j,\mathcal{A}(S_i))=S_i$ or $S_i\setminus Attr_0(S_j,\mathcal{A}(S_i)) \in P$ then $S_i$ is added to $P$.
            \item Otherwise, $S_i$ is added to $Q$.
        \end{enumerate}
    \end{enumerate}

\begin{lemma}\label{L: P Q}
    At the end of Algorithm 1, we have the following two equalities:
    \begin{itemize}
        \item  $P=\{S_i\mid i\in [1,2^n-1] \text{ and Player 0 fully wins }\mathcal{G}(S_i)\}$, and 
        \item $Q=\{S_i\mid i\in [1,2^n-1]\text{ and Player 0 cannot fully win }\mathcal{G}(S_i)\}$.
    \end{itemize}    
\end{lemma}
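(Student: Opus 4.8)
The plan is to prove both equalities simultaneously by strong induction on the index $i$, exploiting the order in which Algorithm 1 processes the subsets. The crucial structural observation is that whenever $S_j \subsetneq S_i$ we have $j < i$ as integers: passing from $i$ to a proper subset only turns off bits in the binary expansion, which strictly decreases the value. Consequently, at the moment the algorithm reaches iteration $i$, every proper subarena $\mathcal{A}(S_j)$ with $S_j \subsetneq S_i$ has already been examined and placed into $P$ or $Q$. The induction hypothesis is that, for every $j < i$ for which $\mathcal{A}(S_j)$ is an arena, $S_j \in P$ iff Player 0 fully wins $\mathcal{G}(S_j)$ and $S_j \in Q$ iff Player 0 cannot fully win $\mathcal{G}(S_j)$. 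Subsets $S_i$ for which $\mathcal{A}(S_i)$ fails to be an arena are disregarded and contribute to neither set, which is consistent with the claim since $\mathcal{G}(S_i)$ is then undefined.

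For the inductive step, fix $i$ with $\mathcal{A}(S_i)$ an arena and split on membership in $\Omega$. If $S_i \in \Omega$, Lemma \ref{L:0 X in Omega} tells us that Player 0 fully wins $\mathcal{G}(S_i)$ exactly when Player 0 fully wins $\mathcal{G}(S_j)$ for every $0$-trap $\mathcal{A}(S_j)$ with $S_j \subsetneq S_i$. The algorithm initializes \textit{is\_win} to true and falsifies it precisely when it detects a $0$-trap $\mathcal{A}(S_j)$ with $S_j \in Q$; since each such $0$-trap is a subarena with $j < i$, the induction hypothesis identifies $S_j \in Q$ with ``Player 0 cannot fully win $\mathcal{G}(S_j)$,'' so the final value of \textit{is\_win} agrees with the criterion of Lemma \ref{L:0 X in Omega}. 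If instead $S_i \in 2^V \setminus \Omega$, Lemma \ref{L: 0 X notin Omega} characterizes a full win for Player 0 by the existence of a $1$-trap $\mathcal{A}(S_j)$ with $S_j \subsetneq S_i$ satisfying its two conditions, and the algorithm sets \textit{is\_win} to true exactly when it finds a $1$-trap $S_j \in P$ for which either $Attr_0(S_j, \mathcal{A}(S_i)) = S_i$ or $S_i \setminus Attr_0(S_j, \mathcal{A}(S_i)) \in P$. Once the induction hypothesis is applied, this is a literal transcription of conditions (1) and (2).

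The point that needs care, and the step I expect to be the principal obstacle, is verifying that every set the algorithm consults at iteration $i$ has genuinely been classified correctly beforehand, so that the translations $S_j \in Q \leftrightarrow$ ``cannot fully win'' and $S_j \in P \leftrightarrow$ ``fully wins'' are legitimate and no forward reference occurs. For the $0$-trap and $1$-trap sets $S_j$ this is immediate: traps are by definition subarenas, and $S_j \subsetneq S_i$ gives $j < i$. The delicate case is the residual set $S_i \setminus Attr_0(S_j, \mathcal{A}(S_i))$ appearing in condition (2). I would argue that it is itself a subarena, being the complement of a $0$-attractor and hence a $0$-trap in $\mathcal{G}(S_i)$; that it is a proper subset of $S_i$, because the attractor contains the nonempty set $S_j$; and therefore that its index is strictly below $i$, so the induction hypothesis applies to it. The disjunction in the algorithm's test then matches the disjunction of Lemma \ref{L: 0 X notin Omega} exactly: the clause $Attr_0(S_j, \mathcal{A}(S_i)) = S_i$ covers the degenerate case where this residual set is empty and so was never entered into $P$, while the clause $S_i \setminus Attr_0(S_j, \mathcal{A}(S_i)) \in P$ handles the nonempty case through the induction hypothesis. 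Once these membership facts are secured, both cases close the induction, and evaluating at $i = 2^n - 1$ establishes the two equalities for $P$ and $Q$.
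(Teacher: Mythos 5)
Your proof is correct and follows essentially the same route as the paper's: strong induction on the index $i$, using the observation that $S_j \subsetneq S_i$ implies $j < i$, and translating membership in $P$ and $Q$ via the induction hypothesis into the criteria of Lemma~\ref{L:0 X in Omega} and Lemma~\ref{L: 0 X notin Omega}. Your treatment of the residual set $S_i \setminus Attr_0(S_j, \mathcal{A}(S_i))$ --- checking it is a $0$-trap, hence a subarena with index below $i$, with the empty case absorbed by the clause $Attr_0(S_j,\mathcal{A}(S_i))=S_i$ --- is in fact spelled out more carefully than in the paper, which only notes $k<i$ implicitly.
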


\begin{proof}
    If $i=1$ then $\mathcal{A}(S_1)$ isn't an arena, and hence $S_1$ is disregarded. 
    For $i=2,3,\ldots,2^n-1$, we want to show that at the end of $i$th iteration, (1) if Player 0 fully wins $\mathcal{G}(S_i)$ then $S_i$ is added to $P$, and (2) if Player 0 cannot fully win $\mathcal{G}(S_i)$ then $S_i$ is added to $Q$. Assume for all $j=1,2,\ldots,i-1$, (1) if Player 0 fully wins $\mathcal{G}(S_j)$ then $S_j$ is added to $P$, and (2) if Player 0 cannot fully win $\mathcal{G}(S_j)$ then $S_j$ is added to $Q$. If $\mathcal{A}(S_i)$ isn't an arena then $S_i$ is disregarded. 
    Otherwise:
    \begin{enumerate}
        \item If $S_i\in \Omega$, then by Lemma \ref{L:0 X in Omega}, Player 0 fully wins $\mathcal{G}(S_i)$ if and only if for all 0-traps $\mathcal{A}(S_j)$ in $\mathcal{G}(S_i)$, Player 0 fully wins $\mathcal{G}(S_j)$. Since for all these $S_j$, $j<i$, we have that if Player 0 fully wins $\mathcal{G}(S_j)$, then $S_j\in P$, otherwise $S_j\in Q$. Therefore, Player 0 fully wins $\mathcal{G}(S_i)$ if and only if for all 0-traps $\mathcal{A}(S_j)$ in $\mathcal{G}(S_i)$, $S_j\in P$. 
        \item If $S_i\notin \Omega$, then by Lemma \ref{L: 0 X notin Omega}, Player 0 fully wins $\mathcal{G}(S_i)$ if and only if there exists a 1-trap $\mathcal{A}(S_j)$ with $S_j\subsetneq S_i$ in $\mathcal{G}(S_i)$ so that (1) Player 0 fully wins $\mathcal{G}(S_j)$ and (2) $Attr_0(S_j, \mathcal{A}(S_i))=S_i$ or Player 0 fully wins $\mathcal{G}(S_k)$ where $S_k=S_i\setminus Attr_0(S_j,\mathcal{A}(S_i))$. Since for all these $S_j$ and $S_k$, $j<i$ and $k<i$, we have that if Player 0 fully wins $\mathcal{G}(S_j)$ (or $\mathcal{G}(S_k)$), then $S_j\in P$ (or $S_k\in P$), otherwise $S_j\in Q$ (or $S_k\in Q$). Therefore, Player 0 fully wins $\mathcal{G}(S_i)$ if and only if there exists a 1-trap $\mathcal{A}(S_j)$ with $S_j\subsetneq S_i$ in $\mathcal{G}(S_i)$ so that (1) $S_j\in P$ and (2) $Attr_0(S_j, \mathcal{A}(S_i))=S_i$ or $S_i\setminus Attr_0(S_j, \mathcal{A}(S_i)) \in P$. 
    \end{enumerate}
    By hypothesis, the proof is done.
\end{proof}

\begin{lemma}\label{L: 1-trap union}
    Let  $\mathcal{A}(X)$ and $\mathcal{A}(Y)$ be 1-traps. 
    If Player 0 fully wins $\mathcal{G}(X)$ and $\mathcal{G}(Y)$ then Player 0 fully wins $\mathcal{G}(X\cup Y)$. 
\end{lemma}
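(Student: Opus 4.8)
The plan is to first show that $\mathcal{A}(X \cup Y)$ is itself a 1-trap (in particular a genuine subarena, so that $\mathcal{G}(X \cup Y)$ is well-defined), and then to argue that from any starting position Player 0 can confine the whole play to whichever of $X$ or $Y$ contains the start, thereby reducing immediately to the hypotheses. No interaction between the two strategies will be needed.

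First I would verify the trap conditions for $X \cup Y$. For condition (2), any Player 1 position $x \in (X \cup Y)_1$ lies in $X_1$ or in $Y_1$; since $X$ and $Y$ are 1-traps, $E(x) \subseteq X$ or $E(x) \subseteq Y$, so in either case $E(x) \subseteq X \cup Y$. For condition (1), any Player 0 position $x \in (X \cup Y)_0$ lies in $X_0$ or $Y_0$ and hence, by the corresponding trap, has an outgoing edge into $X_1$ or $Y_1$, both contained in $(X \cup Y)_1$. Since every node of $\mathcal{A}$ already has an outgoing edge and these arguments keep such an edge inside $X \cup Y$, the pseudo-arena $\mathcal{A}(X \cup Y)$ is an arena, and the two displayed conditions make it a 1-trap.

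Next I would describe the strategy. Fix a start position $v \in X \cup Y$. If $v \in X$, Player 0 plays the winning strategy $\tau_X$ it has in $\mathcal{G}(X)$; if $v \in Y \setminus X$, it plays the winning strategy $\tau_Y$ it has in $\mathcal{G}(Y)$. The crucial point is \emph{confinement}: suppose $v \in X$. A strategy in $\mathcal{G}(X)$ only ever prescribes edges of $\mathcal{A}(X)$, so all of Player 0's moves stay in $X$; and because $X$ is a 1-trap, every Player 1 position of $X$ has all its outgoing edges inside $X$ (condition (2)), so in $\mathcal{G}(X \cup Y)$ Player 1 is still forced to remain in $X$. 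Hence every play consistent with $\tau_X$ from $v$ stays inside $X$ forever, i.e.\ it is exactly a play of $\mathcal{G}(X)$; the symmetric statement holds for $v \in Y \setminus X$ with $\tau_Y$.

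Finally, since such a play $\rho$ stays in $X$, we have $\mathsf{Inf}(\rho) \subseteq X$, and the winning condition in $\mathcal{G}(X \cup Y)$ is $\mathsf{Inf}(\rho) \in \Omega$, which is literally the winning condition of $\mathcal{G}(X)$ on this play. As $\tau_X$ is winning in $\mathcal{G}(X)$, Player 0 wins $\rho$; the $Y$-case is symmetric. Since $v$ was arbitrary, Player 0 fully wins $\mathcal{G}(X \cup Y)$. I expect the only delicate step to be the confinement argument, which rests entirely on the 1-trap property preventing Player 1 from escaping the chosen component; once that is in place, the winning condition transfers verbatim and the proof closes without combining the $X$- and $Y$-strategies.
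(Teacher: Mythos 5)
Your proof is correct, but it takes a genuinely different (and more elementary) route than the paper's. The paper constructs a single combined strategy on $X\cup Y$ using the attractor operator: whenever the token is in $Attr_0(X,\mathcal{A}(X\cup Y))$, Player 0 forces it into $X$ and then follows the winning strategy of $\mathcal{G}(X)$; otherwise Player 0 follows the winning strategy of $\mathcal{G}(Y)$ (plays that ever enter the attractor get redirected into $X$, plays that never do are confined to $Y$ and won there). You dispense with the attractor entirely by observing that a 1-trap is absorbing for Player 1, so from a start in $X$ the strategy $\tau_X$ confines the whole play to $X$ (and symmetrically for $Y\setminus X$ with $\tau_Y$), letting the hypothesis apply verbatim with no interaction between the two strategies; your confinement induction is sound, since Player 0's moves under $\tau_X$ are edges of $\mathcal{A}(X)$ and Player 1's options from $X_1$ satisfy $E(x)\subseteq X$ even inside $\mathcal{G}(X\cup Y)$. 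You also explicitly verify that $\mathcal{A}(X\cup Y)$ is an arena and itself a 1-trap, which the paper leaves implicit even though the 1-trap property of the union is exactly what Lemma \ref{L: P to winning region} later relies on when forming $S_{max}$ — a worthwhile addition. What each approach buys: the paper's attractor construction yields one start-independent strategy and rehearses the same attractor-switching pattern used in Lemma \ref{L: 0 X notin Omega}, while yours produces a per-start family of strategies, which is all that \emph{fully wins} requires (it quantifies only over starting positions), at the benefit of a shorter and more transparent argument.
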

\begin{proof}
    We construct a winning strategy for Player 0 in $\mathcal{G}(X\cup Y)$ as follows. \  If the token is in $Attr_0(X, \mathcal{A}(X\cup Y))$, Player 0 forces the token into $X$ and once the token arrives at $X$, Player 0 follows the winning strategy in $\mathcal{G}(X)$. \ Otherwise, Player 0 follows the winning strategy in $\mathcal{G}(Y)$.
\end{proof}


\begin{lemma}\label{L: P to winning region}
    If for all $S_i\in P$, the arena $\mathcal{A}(S_i)$ isn't 1-trap in $\mathcal{G}$, then $Win_0(\mathcal{G})=\emptyset$ and $Win_1(\mathcal{G})=V$. Otherwise, let $\mathcal{A}(S_{max})$ be the maximal 1-trap in $\mathcal{G}$ so that $S_{max}\in P$. Then $Win_0(\mathcal{G})=S_{max}$ and $Win_1(\mathcal{G})=V\setminus S_{max}$.
\end{lemma}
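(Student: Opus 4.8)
The plan is to prove the ``otherwise'' case as the double inclusion $Win_0(\mathcal{G}) = S_{max}$, and to recover the first case as a degenerate instance of the same argument. Throughout I use the determinacy of M\"uller games (Zielonka), so that $V = Win_0(\mathcal{G}) \cup Win_1(\mathcal{G})$ with the two sets disjoint. Before the inclusions I would check that $S_{max}$ is well defined. Let $\mathcal{F}$ be the family of all $X$ such that $\mathcal{A}(X)$ is a 1-trap in $\mathcal{G}$ and $X \in P$. The union of two 1-traps is again a 1-trap (each Player 1 position keeps all its successors inside its own trap, and each Player 0 position retains a successor), and by Lemma \ref{L: 1-trap union} Player 0 still fully wins the union; hence $\mathcal{F}$ is closed under union and, when nonempty, has a unique largest element $S_{max} = \bigcup_{X \in \mathcal{F}} X$.

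The inclusion $S_{max} \subseteq Win_0(\mathcal{G})$ is direct. Since $\mathcal{A}(S_{max})$ is a 1-trap, Player 1 can never move the token out of $S_{max}$ while Player 0 can always keep it inside, so Player 0 simply follows a full-win strategy for $\mathcal{G}(S_{max})$ (which exists because $S_{max} \in P$) and wins $\mathcal{G}$ from every vertex of $S_{max}$.

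The reverse inclusion is the heart of the proof: I would show that $Win_0(\mathcal{G})$ itself lies in $\mathcal{F}$, so that maximality forces $Win_0(\mathcal{G}) \subseteq S_{max}$. This rests on two facts about the winning region of a determined game. First, $\mathcal{A}(Win_0(\mathcal{G}))$ is a 1-trap in $\mathcal{G}$: a Player 1 vertex of $Win_0(\mathcal{G})$ cannot have a successor in $Win_1(\mathcal{G})$ (otherwise that vertex would itself be winning for Player 1), while a Player 0 vertex has its winning move inside $Win_0(\mathcal{G})$. Second, $Win_0(\mathcal{G}) \in P$, i.e. Player 0 fully wins $\mathcal{G}(Win_0(\mathcal{G}))$, because any winning strategy of Player 0 never leaves $Win_0(\mathcal{G})$ and is therefore already a legal winning strategy on the subarena. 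Combined with the first inclusion this gives $Win_0(\mathcal{G}) = S_{max}$ and, by determinacy, $Win_1(\mathcal{G}) = V \setminus S_{max}$. The first case is then immediate: if no member of $P$ is a 1-trap in $\mathcal{G}$ then $\mathcal{F} = \emptyset$, so by the two facts $Win_0(\mathcal{G})$ cannot be nonempty, whence $Win_0(\mathcal{G}) = \emptyset$ and $Win_1(\mathcal{G}) = V$.

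I expect the main obstacle to be the rigorous justification of the two facts above --- that a player's winning region is a trap for the opponent and that a player fully wins its own winning region. Both are standard for determined games but hinge on the single observation that an optimal strategy never steps into the opponent's winning region, which I would state explicitly and then use to derive both facts. If one prefers to stay entirely within the machinery developed here, these facts can be replaced by an attractor argument: one first checks $Attr_0(S_{max}, \mathcal{A}) = S_{max}$ (otherwise its 0-attractor would be a strictly larger 1-trap in $P$, contradicting maximality), so that $V \setminus S_{max}$ is a 0-trap; then, assuming Player 0 won some vertex of $\mathcal{G}(V \setminus S_{max})$, the set $S_{max} \cup Win_0(\mathcal{G}(V \setminus S_{max}))$ would again be a 1-trap in $P$ strictly containing $S_{max}$, the final contradiction. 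Everything else --- the union-closure and the trap/attractor bookkeeping --- is routine given the lemmas already proved.
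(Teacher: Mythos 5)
Your proposal is correct, and it essentially contains the paper's proof inside it, plus a mildly different primary route. The paper's own argument is two-part: for the degenerate case it uses exactly your key fact --- if $Win_0(\mathcal{G})\neq\emptyset$ then it is a 1-trap that Player 0 fully wins, hence by Lemma \ref{L: P Q} a member of $P$, contradicting the hypothesis --- and, like you, it asserts rather than proves the underlying standard facts (the winning region is a trap for the opponent, and the winner fully wins it). For the main case the paper does \emph{not} run your double-inclusion argument ($Win_0(\mathcal{G})\in\mathcal{F}$ plus maximality); instead it defines $S_{max}$ as the union of all 1-traps in $P$ (via Lemma \ref{L: 1-trap union}, exactly as you do), claims $V\setminus S_{max}$ is a 0-trap, and shows Player 1 fully wins $\mathcal{G}(V\setminus S_{max})$ by contradiction: otherwise a trap $Y$ fully won by Player 0 exists in that subgame, and $S_{max}\cup Y$ would be a strictly larger 1-trap in $P$ --- which is verbatim your fallback attractor argument. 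The trade-off: your primary route is shorter and uniform across both cases, at the cost of importing the determinacy facts from outside the paper's lemma toolkit (though the paper's first part already does this implicitly); the paper's route stays closer to Lemmas \ref{L: exists cannot win, exists opponent wins}--\ref{L: 1-trap union} and yields the slightly stronger localized conclusion that Player 1 fully wins the subgame on the complement, not merely the set equality. Two points where your version is in fact tighter than the printed proof: the paper asserts without justification that $V\setminus S_{max}$ is a 0-trap, and your observation that $Attr_0(S_{max},\mathcal{A})=S_{max}$ by maximality (else the attractor would be a strictly larger 1-trap in $P$) is precisely the missing justification; and the paper calls its set $Y$ a ``0-trap'' while actually using the 1-trap property of Player 0's winning region in the subgame, a slip your explicit statement of the trap facts avoids.
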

\begin{proof}
    For the first part of the lemma, assume  that $Win_0(\mathcal{G})\ne\emptyset$. By Lemma \ref{L: P Q}, for all arenas $\mathcal{A}(X)$, $X\in P$ if and only if Player 0 fully wins $\mathcal{G}(X)$. Now note that $Win_0(\mathcal{G})\ne\emptyset$ is 1-trap such that
     Player 0 fully wins $\mathcal G (Win_0(\mathcal{G}))$. This contradicts with the assumption of the first part. 
For the second part, consider all 1-traps $X$ in $P$. \  Player 0 fully wins the games $\mathcal G(X)$ in each of these 1-traps by definition of $P$. By Lemma \ref{L: 1-trap union}, Player 0 fully wins the union of these 1-traps. Clearly, this union is $S_{max} \in P$. Consider $V\setminus S_{max}$. This set is a $0$-trap. Suppose Player 1 does not win 
$\mathcal G(V\setminus S_{max})$ fully. Then there exists 
a $0$-trap $Y$ in game $\mathcal G(V\setminus S_{max})$ such that Player 0 fully wins $\mathcal G(Y)$. For every Player 1 position in $y\in Y$ and outgoing edge $(y,x)$ we have either $x\in Y$ or $x\in S_{max}$. This implies $S_{max}\cup Y$ is 1-trap  such that Player 0 fully wins $\mathcal G(S_{max}\cup Y)$. So, $S_{max}\cup Y$ must be in $P$. This contradicts with the choice of $S_{max}$. 
\end{proof}

\section{Implementation} \label{S:Implementations}

In this section, we will introduce the data structure and, based on the data structure, provide two 
algorithms for deciding  M\"uller games.

\subsection{Algorithm 1}

Let $\mathcal{G}=(\mathcal{A},\Omega)$ be a M\"uller game where $V=\{v_1,v_2,\ldots,v_{n}\}$. We already assigned  $n$-bit binary numbers $i$ to non-empty pseudo-arenas $\mathcal{A}(S_i)$ in $\mathcal{G}$, where $S_i=\{v_j\mid \text{the $j$th bit of $i$ is 1}\}$. \ With this encoding, we can apply a binary tree to maintain any given 
collection of vertex sets $\mathcal{S}=\{S_{i_1},S_{i_2},\ldots,S_{i_k}\}$ so that insertions, deletions and queries to any of these sets takes $O(n)$ time, traversing all $\mathcal S$ takes time $O(2^n)$, 
and building the binary tree from $\mathcal{S}$ takes $O(kn)$. So, from now on, we apply the binary trees to maintain $\Omega$, $P$ and $Q$. Building the binary tree from $\Omega$ takes $O(2^nn)$ time.

\begin{lemma} \label{L: P and Q implement}
    There exists an algorithm that computes $P$ and $Q$ for a M\"uller game in time $O(3^{|V|}\cdot (|V|+|E|))$.
\end{lemma}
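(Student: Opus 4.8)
The plan is to note that the correctness of Algorithm 1 is already guaranteed by Lemma \ref{L: P Q}, so the only remaining task is to bound its running time by $O(3^{|V|}(|V|+|E|))$. Throughout I write $n=|V|$ and maintain $\Omega$, $P$ and $Q$ as binary trees indexed by the $n$-bit codes $i$, so that each membership query, insertion, or deletion costs $O(n)$, exactly as described above.

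First I would account for the cost incurred outside the inner loop. The outer loop runs for $i=1,\dots,2^n-1$. For each $i$ the algorithm builds $S_i$ from its bit representation in $O(n)$, tests whether $\mathcal A(S_i)$ is an arena (i.e.\ every node of $S_i$ retains an outgoing edge inside $S_i$) in $O(n+|E|)$, queries $S_i\in\Omega$ in $O(n)$, and finally inserts $S_i$ into $P$ or $Q$ in $O(n)$. Summed over all $i$ this contributes $O\big(2^n(n+|E|)\big)$, which is dominated by the target bound.

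The heart of the analysis is the inner loop, which for a fixed $S_i$ ranges over all proper subsets $S_j\subsetneq S_i$; these can be enumerated directly as the submasks of the code $i$. The crucial combinatorial identity is
$$
\sum_{i=1}^{2^n-1}\#\{\,S_j : S_j\subseteq S_i\,\}\;\le\;\sum_{k=0}^{n}\binom{n}{k}2^{k}\;=\;3^{n},
$$
obtained by grouping the sets $S_i$ by their cardinality $k=|S_i|$ and applying the binomial theorem. Hence the total number of $(S_i,S_j)$ pairs examined across the entire run is at most $3^n$. For each such pair the algorithm performs only a constant number of elementary operations, each of cost $O(|V|+|E|)$: deciding whether $\mathcal A(S_j)$ is a $0$-trap (resp.\ $1$-trap) in $\mathcal G(S_i)$ by inspecting the edges leaving $S_j$ inside $S_i$, querying $S_j\in Q$ (resp.\ $S_j\in P$) in $O(n)$, computing $Attr_0(S_j,\mathcal A(S_i))$ in $O(|E|)$, and testing whether $S_i\setminus Attr_0(S_j,\mathcal A(S_i))$ lies in $P$ in $O(n)$.

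Multiplying the $3^n$ pairs by the $O(|V|+|E|)$ per-pair cost yields $O\big(3^{|V|}(|V|+|E|)\big)$ for the inner loop, which absorbs the outer-loop contribution and gives the claimed bound. The one point that needs care is the submask enumeration: iterating over the submasks of $i$ via the standard update $j\leftarrow(j-1)\wedge i$ visits exactly the $2^{|S_i|}$ subsets with $O(n)$ bookkeeping per step, so the enumeration overhead is itself $O(3^n\cdot n)$ and stays within the bound. I expect the only genuinely technical part to be verifying that every per-pair test — in particular the trap check and the attractor computation performed on the subarena $\mathcal A(S_i)$ rather than on all of $\mathcal A$ — is implementable in $O(|V|+|E|)$; the dominant $3^{|V|}$ factor is then purely the subset-sum identity above.
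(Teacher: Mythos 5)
Your proposal is correct and follows essentially the same route as the paper's proof: both bound the number of $(S_i,S_j)$ pairs by the subset-counting identity $\sum_{k}\binom{|V|}{k}2^{k}=3^{|V|}$ and charge each pair a constant number of $O(|V|+|E|)$ operations (trap test and attractor computation in $O(|E|)$, binary-tree membership queries in $O(|V|)$), with the outer-loop arena checks contributing a dominated $O(2^{|V|}(|V|+|E|))$ term. Your explicit submask enumeration via $j\leftarrow(j-1)\wedge i$ is merely a concrete implementation of the enumeration the paper handles with its binary-tree traversal, so there is no substantive difference.
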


\begin{proof}
    We use the Algorithm 1 from Figure \ref{F:Partition Muller game}. We enumerate all $S_i$ so that $\mathcal{A}(S_i)$ is an arena. Since checking whether a pseudo-arena is an arena takes $O(|E|)$ time, this process takes $O(2^{|V|}\cdot |E|)$ time. Then, we enumerate all $S_j$ with $S_j\subsetneq S_i$ and there are $\sum_{k=1}^{|V|}\binom{|V|}{k}(2^{k}-1)<3^{|V|}$ such pairs of $S_i$ and $S_j$. By applying the binary trees, the enumeration takes $O(3^{|V|})$ time. If $S_i\in \Omega$ then verifying whether $\mathcal{A}(S_j)$ is a 0-trap in $\mathcal{G}(S_i)$ takes $O(|E|)$ time and checking whether $S_j$ is in $Q$ takes $O(|V|)$ time. If $S_i\notin\Omega$ then verifying whether $\mathcal{A}(S_j)$ is a 1-trap in $\mathcal{G}(S_i)$ takes $O(|E|)$ time, checking whether a vertex set is in $P$ takes $O(|V|)$ time and computing $Attr_0(S_j,\mathcal{A}(S_i))$ takes $O(|E|)$ time. Hence the operations on $S_j$ takes $O(|V|+|E|)$ time. This algorithm runs in $O(3^{|V|} \cdot (|V|+|E|))$ time.
\end{proof}

\begin{lemma}\label{L: P to winning region implement}
    
    Given M\"uller game $\mathcal{G}$ and $P$, there exists an algorithm which computes $Win_0(\mathcal{G})$ and $Win_1(\mathcal{G})$ in time 
    $O(2^{|V|}\cdot (|V|+|E|))$.
\end{lemma}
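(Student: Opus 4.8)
The goal is to prove Lemma \ref{L: P to winning region implement}: given a Müller game $\mathcal{G}$ and the precomputed collection $P$, compute $Win_0(\mathcal{G})$ and $Win_1(\mathcal{G})$ in time $O(2^{|V|}\cdot(|V|+|E|))$.

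The structural content is already supplied by Lemma \ref{L: P to winning region}: the winning region $Win_0(\mathcal{G})$ is exactly $S_{max}$, the maximal set in $P$ whose arena $\mathcal{A}(S_{max})$ is a $1$-trap in $\mathcal{G}$ (and if no such set exists, $Win_0(\mathcal{G})=\emptyset$). So the implementation reduces to: search through $P$ for all $S_i$ such that $\mathcal{A}(S_i)$ is a $1$-trap in $\mathcal{G}$, and among those select the one of largest cardinality. The plan is therefore to justify that this search and selection can be carried out within the stated time bound, relying on the binary-tree data structure described at the start of Section \ref{S:Implementations}.

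The steps I would carry out are as follows. First, traverse the binary tree storing $P$. Since $P\subseteq\{S_i\mid 1\le i\le 2^n-1\}$, traversing all of $P$ costs $O(2^{|V|})$, as noted for traversals of such collections. Second, for each $S_i$ encountered during the traversal, test whether $\mathcal{A}(S_i)$ is a $1$-trap in $\mathcal{G}$; by the definition of a $\sigma$-trap this amounts to checking, for each Player-$1$ position in $S_i$ that all its outgoing edges stay inside $S_i$, and for each Player-$0$ position in $S_i$ that at least one outgoing edge leads back into $S_i$. Each such test inspects the edges incident to vertices of $S_i$, and hence costs $O(|E|)$ time. Third, maintain a running maximum: among all $S_i$ passing the $1$-trap test, keep the one with the largest $|S_i|$, breaking ties arbitrarily; this bookkeeping adds only $O(|V|)$ per candidate. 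At the end, either no candidate was found, in which case we output $Win_0(\mathcal{G})=\emptyset$ and $Win_1(\mathcal{G})=V$, or we output $Win_0(\mathcal{G})=S_{max}$ and $Win_1(\mathcal{G})=V\setminus S_{max}$, which is correct by Lemma \ref{L: P to winning region}. Computing the complement $V\setminus S_{max}$ is an $O(|V|)$ operation.

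For the running-time accounting, the traversal visits $O(2^{|V|})$ sets, and for each visited set the dominant cost is the $O(|V|+|E|)$ spent on the $1$-trap check together with the cardinality bookkeeping. Multiplying gives the claimed bound $O(2^{|V|}\cdot(|V|+|E|))$. I do not anticipate a genuine obstacle here, since the mathematical correctness is entirely delegated to Lemma \ref{L: P to winning region}; the only point requiring care is making sure the per-set work stays within $O(|V|+|E|)$, i.e.\ that the $1$-trap predicate is evaluated by a single pass over the relevant edges rather than, say, recomputing attractors or re-scanning all of $\Omega$. The role of the binary tree is merely to make the traversal of $P$ and the membership bookkeeping efficient; the key invariant is that each of the $O(2^{|V|})$ candidate sets is touched exactly once.
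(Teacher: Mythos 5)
Your proposal is correct and follows essentially the same route as the paper: both delegate correctness to Lemma~\ref{L: P to winning region} and then search the $O(2^{|V|})$ candidate sets with an $O(|V|+|E|)$ per-set test of membership in $P$ and the $1$-trap property. The only (immaterial) difference is that the paper enumerates indices from $2^{|V|}-1$ downward and stops at the first hit --- exploiting that $X\subsetneq Y$ forces the index of $X$ to be smaller than that of $Y$, so $S_{max}$ is found first --- whereas you scan all of $P$ keeping a running maximum-cardinality candidate, which is equally valid (no ties can occur at the top, since every $1$-trap in $P$ is contained in $S_{max}$) and yields the same bound.
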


\begin{proof}
    By Lemma \ref{L: P to winning region}, we enumerate $S_i$ from $i=2^{|V|}-1$ to $i=1$. Since checking whether $S_i\in P$ and $\mathcal{A}(S_i)$ is a 1-trap  in $\mathcal{G}$ takes $O(|V|+|E|)$ time, it takes $O(2^{|V|}\cdot (|V|+|E|))$ time to find the first $S_i$ so that $S_i\in P$ and $\mathcal{A}(S_i)$ is a 1-trap  in $\mathcal{G}$. If such $S_i$ exists, then $Win_0=S_i$ and $Win_1=V\setminus S_i$, otherwise $Win_0=\emptyset$ and $Win_1=V$. This algorithm takes $O(2^{|V|}\cdot (|V|+|E|))$ time in total. 
\end{proof}

By Lemmas \ref{L: P and Q implement} and \ref{L: P to winning region implement}, the following theorem is proved.

\begin{theorem}\label{Thm: solving muller game}
    There exists an algorithm that, given a M\"uller game $\mathcal{G}$,  decides $\mathcal G$ 
    in time $O(3^{|V|}\cdot (|V|+|E|))$.
\end{theorem}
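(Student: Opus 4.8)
The plan is to obtain the theorem as a direct composition of the two implementation lemmas already established, namely Lemma~\ref{L: P and Q implement} and Lemma~\ref{L: P to winning region implement}. The decision algorithm proceeds in two phases. First, I would run Algorithm~1 from Figure~\ref{F:Partition Muller game} on the input game $\mathcal G$ to produce the partition of all non-empty subarenas into the collections $P$ and $Q$. By Lemma~\ref{L: P Q}, at termination $P$ is exactly the family of $S_i$ for which Player~0 fully wins $\mathcal G(S_i)$ and $Q$ is exactly the family for which Player~0 cannot fully win; by Lemma~\ref{L: P and Q implement} this phase runs in $O(3^{|V|}\cdot(|V|+|E|))$ time.

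Second, I would feed the collection $P$ into the extraction procedure of Lemma~\ref{L: P to winning region implement}. By Lemma~\ref{L: P to winning region}, the winning regions are read off directly from $P$: if $P$ contains no $S_i$ whose subarena $\mathcal A(S_i)$ is a $1$-trap in $\mathcal G$, then $Win_0(\mathcal G)=\emptyset$ and $Win_1(\mathcal G)=V$; otherwise, taking the maximal such $1$-trap $S_{max}\in P$, we have $Win_0(\mathcal G)=S_{max}$ and $Win_1(\mathcal G)=V\setminus S_{max}$. Scanning the subsets from $i=2^{|V|}-1$ downward and testing, for each, membership in $P$ (in $O(|V|)$ time via the binary tree) together with the $1$-trap condition (in $O(|E|)$ time) locates $S_{max}$, so this phase costs $O(2^{|V|}\cdot(|V|+|E|))$.

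Finally, I would combine the two running times. The total cost is $O(3^{|V|}\cdot(|V|+|E|)) + O(2^{|V|}\cdot(|V|+|E|))$, and since $2^{|V|}=o(3^{|V|})$ the first term dominates, yielding the claimed $O(3^{|V|}\cdot(|V|+|E|))$ bound. Correctness of the decision then follows directly from the guarantees of Lemma~\ref{L: P Q} and Lemma~\ref{L: P to winning region}.

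As for where the real work lies: for this theorem \emph{itself} there is essentially no obstacle, since it is a bookkeeping composition. The substance has already been discharged in the earlier results — the structural characterizations of fully-won subarenas in Lemma~\ref{L:0 X in Omega} and Lemma~\ref{L: 0 X notin Omega}, the inductive correctness of Algorithm~1 in Lemma~\ref{L: P Q}, and the recovery of the global winning regions from $P$ in Lemma~\ref{L: P to winning region}. The only quantitative point that warrants care is confirming that the dominant exponential factor is genuinely $3^{|V|}$ rather than something larger: this rests on the counting bound $\sum_{k=1}^{|V|}\binom{|V|}{k}(2^{k}-1) < 3^{|V|}$, which controls the number of nested pairs $(S_i,S_j)$ with $S_j\subsetneq S_i$ inspected across all iterations, exactly as used inside the proof of Lemma~\ref{L: P and Q implement}.
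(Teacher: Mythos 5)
Your proposal is correct and matches the paper's own proof of Theorem~\ref{Thm: solving muller game}, which likewise obtains the result as the direct composition of Lemma~\ref{L: P and Q implement} (computing $P$ and $Q$ via Algorithm~1 in $O(3^{|V|}\cdot(|V|+|E|))$) with Lemma~\ref{L: P to winning region implement} (extracting $Win_0(\mathcal G)$ and $Win_1(\mathcal G)$ from $P$ in $O(2^{|V|}\cdot(|V|+|E|))$), the first term dominating. Your added remarks on where the substantive work lies (Lemmas~\ref{L:0 X in Omega}, \ref{L: 0 X notin Omega}, \ref{L: P Q}, \ref{L: P to winning region} and the counting bound $\sum_{k=1}^{|V|}\binom{|V|}{k}(2^k-1)<3^{|V|}$) are accurate and consistent with the paper.
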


\subsection{Algorithm 2}

We want to improve Algorithm 1 by reducing the 
computation of the attractor operator. For this, we need to  strengthen Lemma \ref{L: 0 X notin Omega} that will be used
in our next algorithm.

\begin{lemma}\label{L:0 X notin Omega strengthened}
    Let  $X\in 2^{V}\setminus \Omega$ be a subarena in $\mathcal{G}$. Player 0 fully wins $\mathcal{G}(X)$ if and only if there exists a 1-trap $\mathcal{A}(Y)$ with $Y\subsetneq X$ in $\mathcal{G}(X)$ such that the following condition is satisfied:
    \begin{enumerate}
    \item Player 0 fully wins $\mathcal{G}(Y)$ and $|Y|=|X|-1$, or
    \item Player 0 fully wins $\mathcal{G}(Y)$, $Y=Attr_0(Y, \mathcal{A}(X))$ and Player 0 fully wins $\mathcal{G}(X\setminus Y)$. 
    \end{enumerate}
\end{lemma}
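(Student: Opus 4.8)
The plan is to prove this strengthened equivalence by reducing it, in both directions, to the already-established Lemma~\ref{L: 0 X notin Omega}, together with one structural observation about the bipartite arena. Throughout I use that in a $1$-trap $\mathcal{A}(Y)$ inside $\mathcal{G}(X)$ every Player $1$ position keeps all its edges in $Y$ while every Player $0$ position retains at least one edge into $Y$, that the complement of a $0$-attractor is a $0$-trap, and that $Attr_0$ is idempotent. The backward direction will be nearly immediate; the forward direction, where one must manufacture a witness of the special shape demanded by (1) or (2), is where the work lies.

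For the backward direction, suppose a $1$-trap $\mathcal{A}(Y)$ with $Y\subsetneq X$ satisfies (1) or (2). If it satisfies (2), then $Y=Attr_0(Y,\mathcal{A}(X))$, so $X\setminus Y=X\setminus Attr_0(Y,\mathcal{A}(X))$, and the hypotheses ``Player $0$ fully wins $\mathcal{G}(Y)$'' and ``Player $0$ fully wins $\mathcal{G}(X\setminus Y)$'' are exactly condition (1) and the second disjunct of condition (2) of Lemma~\ref{L: 0 X notin Omega}; that lemma then yields that Player $0$ fully wins $\mathcal{G}(X)$. If it satisfies (1), I first observe that $|Y|=|X|-1$ forces $Attr_0(Y,\mathcal{A}(X))=X$: the unique vertex $v\in X\setminus Y$ has, by bipartiteness, all of its out-neighbours in $X$ on the opposite side, hence distinct from $v$ and therefore inside $Y$, so $v$ is attracted to $Y$ (a Player $0$ vertex has an edge into $Y$; a Player $1$ vertex has \emph{all} its edges into $Y$). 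Thus the first disjunct of condition (2) of Lemma~\ref{L: 0 X notin Omega} holds with this $Y$, and again Player $0$ fully wins $\mathcal{G}(X)$.

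For the forward direction, assume Player $0$ fully wins $\mathcal{G}(X)$ and apply Lemma~\ref{L: 0 X notin Omega} to obtain a $1$-trap $\mathcal{A}(Y)$, $Y\subsetneq X$, with Player $0$ fully winning $\mathcal{G}(Y)$ and with either $Attr_0(Y,\mathcal{A}(X))=X$ or Player $0$ fully winning $\mathcal{G}(X\setminus Attr_0(Y,\mathcal{A}(X)))$. I split on these cases. If $Attr_0(Y,\mathcal{A}(X))\neq X$, I set $Y'=Attr_0(Y,\mathcal{A}(X))$ and claim $Y'$ witnesses (2): indeed $\mathcal{A}(Y')$ is a $1$-trap (every attracted Player $1$ vertex has all edges inside $Y'$, every attracted Player $0$ vertex has an edge toward the attractor, and the $Y$-part is itself a $1$-trap), Player $0$ fully wins $\mathcal{G}(Y')$ by running the attractor strategy into $Y$ and then her full-win strategy for $\mathcal{G}(Y)$, which confines the play to the $1$-trap $Y$ so that $\mathsf{Inf}(\rho)\subseteq Y$, and $Attr_0(Y',\mathcal{A}(X))=Y'$ by idempotence; finally $X\setminus Y'=X\setminus Attr_0(Y,\mathcal{A}(X))$ is fully won by hypothesis.

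The remaining case $Attr_0(Y,\mathcal{A}(X))=X$ is the crux, and is where I expect the main obstacle: the second disjunct may fail ($X\setminus Attr_0(Y,\mathcal{A}(X))=\emptyset$), so I must instead \emph{build} a fully-won $1$-trap of size exactly $|X|-1$ to satisfy (1). The plan is a peeling argument adding one vertex at a time. Since $Attr_0(Y,\mathcal{A}(X))=X$, I order $X\setminus Y$ by attractor layers as $u_1,\dots,u_r$ so that each $u_j$ is attracted to $Z_{j-1}:=Y\cup\{u_1,\dots,u_{j-1}\}$. Then each $\mathcal{A}(Z_j)$ is a $1$-trap, since appending a single attracted vertex preserves both trap conditions, and by induction Player $0$ fully wins $\mathcal{G}(Z_j)$: from any position in $Z_j$ she forces the token into $Z_{j-1}$, which she fully wins by the inductive hypothesis, and $Z_{j-1}$ being a $1$-trap the play then stays inside it. Taking $Z_{r-1}$ produces a $1$-trap with $|Z_{r-1}|=|X|-1$ that Player $0$ fully wins, a witness for (1). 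The delicate points to write out in full are that the single-vertex attractor step genuinely preserves both the $1$-trap property and the full win, and that processing the attractor layer by layer makes each $u_j$ attracted to $Z_{j-1}$; both follow from the attractor construction, but they are the steps that carry the argument.
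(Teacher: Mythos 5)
Your proof is correct and follows essentially the same route as the paper: both directions are reduced to Lemma~\ref{L: 0 X notin Omega}, with $Y'=Attr_0(Y,\mathcal{A}(X))$ witnessing condition (2) when the attractor is proper, and a fully-won $1$-trap of size $|X|-1$ witnessing condition (1) when the attractor is all of $X$. The only difference is that where the paper dismisses the existence of that size-$(|X|-1)$ trap as ``easy to see,'' you supply the detail explicitly via the attractor-layer peeling argument (and likewise verify bipartiteness for the backward direction of (1)), which is an elaboration of the same argument rather than a different approach.
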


\begin{proof} First we show that Lemma \ref{L: 0 X notin Omega} implies this lemma.  Let  $X\in 2^{V}\setminus \Omega$ be a subarena in $\mathcal{G}$. Assume that exists a 1-trap $\mathcal{A}(Y)$ with $Y\subsetneq X$ in $\mathcal{G}(X)$ such that Player 0 fully wins $\mathcal{G}(Y)$ and $Attr_0(Y, \mathcal{A}(X))=X$.  Then it is easy to see that there exists a 1-trap $\mathcal{A}(Y')$ with $|Y'|=|X|-1$ in $\mathcal{G}(X)$ so that Player 0 fully wins $\mathcal{G}(Y')$.  \ If there  exists a 1-trap $\mathcal{A}(Y)$ with $Y\subsetneq X$ in $\mathcal{G}(X)$ such that Player 0 fully wins $\mathcal{G}(Y)$ and $\mathcal{G}(X\setminus Attr_0(Y, \mathcal{A}(X)))$, then we set $Y'=Attr_0(Y, \mathcal{A}(X))$. Thus we have that Player 0 fully wins $\mathcal{G}(Y')$ and $\mathcal{G}(X\setminus Y')$. 
    
   Now we show that conditions (1) and (2) of this lemma imply Lemma \ref{L: 0 X notin Omega}.  If there  exists a 1-trap $\mathcal{A}(Y)$ with $Y\subsetneq X$ in $\mathcal{G}(X)$ such that Player 0 fully wins $\mathcal{G}(Y)$ and $|Y|=|X|-1$, then $Attr_0(Y, \mathcal{A}(X))=X$. If there  exists a 1-trap $\mathcal{A}(Y)$ with $Y\subsetneq X$ in $\mathcal{G}(X)$ such that Player 0 fully wins $\mathcal{G}(Y)$, $Y=Attr_0(Y,\mathcal{A}(X))$ and Player 0 fully wins $\mathcal{G}(X\setminus Y)$, then Player 0 also fully wins $\mathcal{G}(X\setminus Attr_0(Y,\mathcal{A}(X)))$. Therefore, we have that there exists a 1-trap $\mathcal{A}(Y)$ with $Y\subsetneq X$ in $\mathcal{G}(X)$ such that the following two conditions are satisfied:  (a) Player 0 fully wins $\mathcal{G}(Y)$, and 
(b) $Attr_0(Y, \mathcal{A}(X))=X$ or Player 0 fully wins $\mathcal{G}(X\setminus Attr_0(Y, \mathcal{A}(X)))$. These two conditions are statements of Lemma \ref{L: 0 X notin Omega}. 
\end{proof}

Now we apply Lemma \ref{L:0 X notin Omega strengthened}
that changes Algorithm 1 as follows.  Run Algorithm 1 but 
replace the part of Algorithm 1 that corresponds to Lemma \ref{L: 0 X notin Omega} with the following code.

\begin{figure}[H]
    \centering 
    \scriptsize
    \begin{tabular}{l}
        \hspace*{8mm}\textbf{for} $S_j\subsetneq S_i$ \textbf{do}\text{\hspace{70mm}$\longrightarrow$ Lemma \ref{L:0 X notin Omega strengthened}}\\
        \hspace*{12mm}\textbf{if} $\mathcal{A}(S_j)$ is a 1-trap in $\mathcal{G}(S_i)$ \text{and} $S_j\in P$ \textbf{then}\\
        \hspace*{16mm}\textbf{if} $|S_j|=|S_i|-1$ \textbf{then}\\
        \hspace*{20mm}$is\_win\leftarrow$\text{true};\\
        \hspace*{20mm}\textbf{break};\\
        \hspace*{16mm}\textbf{end}\\
        \hspace*{16mm}\textbf{if} $S_j=Attr_0(S_j,\mathcal{A}(S_i)) $ and $S_i\setminus S_j\in P$ \textbf{then}\\
        \hspace*{20mm}$is\_win\leftarrow$\text{true};\\
        \hspace*{20mm}\textbf{break};\\
        \hspace*{16mm}\textbf{end}\\
        \hspace*{12mm}\textbf{end}\\
        \hspace*{8mm}\textbf{end}\\
    \end{tabular}
    \caption{Algorithm 2: the replacing part of Algorithm 1}
    \label{F:Partition Muller game strengthed}
\end{figure}

Now by repeating the proof of Lemma \ref{L: P Q}, 
we get the following.

\begin{lemma}\label{L: P Q strengthened}
    At the end of Algorithm 2, we have the following two equalities:
    \begin{itemize}
        \item  $P=\{S_i\mid i\in [1,2^n-1] \text{ and Player 0 fully wins }\mathcal{G}(S_i)\}$, and 
        \item $Q=\{S_i\mid i\in [1,2^n-1]\text{ and Player 0 cannot fully win }\mathcal{G}(S_i)\}$.
    \end{itemize}    
\end{lemma}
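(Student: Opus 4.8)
The plan is to prove Lemma~\ref{L: P Q strengthened} by induction on $i$, mirroring the proof of Lemma~\ref{L: P Q} almost verbatim. The observation that makes this work is that Algorithm~2 differs from Algorithm~1 only in the code block executed when $S_i \notin \Omega$: the block of Figure~\ref{F:Partition Muller game} that implements Lemma~\ref{L: 0 X notin Omega} is replaced by the block of Figure~\ref{F:Partition Muller game strengthed} that implements Lemma~\ref{L:0 X notin Omega strengthened}. Since Lemma~\ref{L:0 X notin Omega strengthened} was already shown to be logically equivalent to Lemma~\ref{L: 0 X notin Omega}, the two algorithms produce the same sets $P$ and $Q$, and the correctness argument carries over with the substitution of one lemma for the other.

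Concretely, I would set up the same induction as in Lemma~\ref{L: P Q}: the base case $i=1$ gives a non-arena $\mathcal{A}(S_1)$, which is disregarded, and for the inductive step I assume that for all $j \in \{1,\dots,i-1\}$ the set $S_j$ lies in $P$ exactly when Player~0 fully wins $\mathcal{G}(S_j)$ and in $Q$ otherwise. If $\mathcal{A}(S_i)$ is not an arena it is disregarded. If $S_i \in \Omega$, the relevant code is unchanged, so the argument is identical to case~(1) of Lemma~\ref{L: P Q}: by Lemma~\ref{L:0 X in Omega} Player~0 fully wins $\mathcal{G}(S_i)$ iff every $0$-trap $\mathcal{A}(S_j)$ in $\mathcal{G}(S_i)$ has $S_j \in P$, and since each such $S_j$ satisfies $j < i$ the induction hypothesis turns ``Player~0 fully wins $\mathcal{G}(S_j)$'' into the membership test ``$S_j \in P$'' that the algorithm performs.

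The only genuinely new branch is $S_i \notin \Omega$, where I would invoke Lemma~\ref{L:0 X notin Omega strengthened} in place of Lemma~\ref{L: 0 X notin Omega}. That lemma states that Player~0 fully wins $\mathcal{G}(S_i)$ iff there is a $1$-trap $\mathcal{A}(S_j)$ with $S_j \subsetneq S_i$ satisfying condition~(1), namely Player~0 fully wins $\mathcal{G}(S_j)$ and $|S_j| = |S_i|-1$, or condition~(2), namely Player~0 fully wins $\mathcal{G}(S_j)$, $S_j = Attr_0(S_j,\mathcal{A}(S_i))$, and Player~0 fully wins $\mathcal{G}(S_i\setminus S_j)$. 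I would check that the replacement code in Figure~\ref{F:Partition Muller game strengthed} tests exactly these two disjuncts, and then apply the induction hypothesis to rewrite the ``fully wins'' statements as membership tests. The key bookkeeping point to verify here is that the induction hypothesis actually applies to the sets consulted: $S_j \subsetneq S_i$ forces $j < i$, and because a $1$-trap $\mathcal{A}(S_j)$ is a non-empty subarena we have $S_i \setminus S_j \subsetneq S_i$, so the index of $S_i \setminus S_j$ is also strictly less than $i$. Hence ``$S_j \in P$'' is equivalent to ``Player~0 fully wins $\mathcal{G}(S_j)$'' and ``$S_i \setminus S_j \in P$'' is equivalent to ``Player~0 fully wins $\mathcal{G}(S_i\setminus S_j)$'', matching the algorithm's conditionals.

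I do not expect a serious obstacle: the heavy lifting was done in establishing the equivalence of Lemmas~\ref{L: 0 X notin Omega} and~\ref{L:0 X notin Omega strengthened}, so the present lemma reduces to transporting the proof of Lemma~\ref{L: P Q} through that equivalence. The one detail deserving care is the index-ordering argument for $S_i \setminus S_j$, since condition~(2) now references $X \setminus Y$ directly rather than $X \setminus Attr_0(Y,\mathcal{A}(X))$; confirming that this set still has strictly smaller index (and is non-empty, so that the membership query is well defined) is what guarantees the induction hypothesis can be applied and closes the argument.
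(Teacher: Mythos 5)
Your proposal is correct and matches the paper exactly: the paper proves Lemma~\ref{L: P Q strengthened} simply by ``repeating the proof of Lemma~\ref{L: P Q}'' with Lemma~\ref{L:0 X notin Omega strengthened} substituted for Lemma~\ref{L: 0 X notin Omega}, which is precisely the induction you carry out. Your extra bookkeeping --- that $S_j \subsetneq S_i$ forces $j<i$, and that under the check $S_j = Attr_0(S_j,\mathcal{A}(S_i))$ the set $S_i\setminus S_j$ is a non-empty $0$-trap (hence a subarena with index $<i$, so the induction hypothesis and the membership query apply) --- is a detail the paper leaves implicit, and it is handled correctly.
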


In the following lemmas,  we apply a binary tree to enumerate the sets $X\subseteq 2^{V}$. For each vertex $v\in V$, we maintain the number of outgoing edges from $v$ to vertices in $X$ by $out_X(v)$. During the traversing on the binary tree, there are $O(2^{|V|})$ insertions and deletions of vertices. Therefore, maintaining $out_X(v)$ takes $O(2^{|V|}\cdot |V|)$ time in total. Also let $out(v)=|E(v)|$ for $v\in V$. Then we have the following: $\mathcal{A}(X)$ is an arena if and only if for all $v\in X$, $out_X(v)\ne 0$. $\mathcal{A}(X)$ is a $\sigma$-trap  if and only if for all $v\in X\cap V_\sigma$, $out_X(v)=out(v)$ and for all $v\in X\cap V_{\bar \sigma}$, $out_X(v)>0$. For a $\bar{\sigma}$-trap $\mathcal{A}(X)$, $Attr_\sigma(X,\mathcal{A})=X$ if and only if for all $v\in V_\sigma \setminus X$, $out_X(v)=0$ and for all $v\in V_{\bar\sigma} \setminus X$, $out_X(v)<out(v)$. Hence, the following lemma is proved.

\begin{lemma}\label{L: all enumeration}
    There exists an $O(2^{|V|}\cdot |V|)$-time algorithm for each of the following enumerations:
    \begin{itemize}
        \item Enumerating all arenas $\mathcal{A}(X)$ in $\mathcal{G}$.
        \item Enumerating all $\sigma$-traps $\mathcal{A}(X)$ in $\mathcal{G}$.
        \item Enumerating all $\bar{\sigma}$-traps $\mathcal{A}(X)$ in $\mathcal{G}$ so that $Attr_\sigma(X, \mathcal{A})=X$.
    \end{itemize}
\end{lemma}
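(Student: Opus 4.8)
The plan is to derive all three running-time bounds from a single enumeration scheme together with the three equivalences recorded in the paragraph preceding the statement. The backbone is a depth-first traversal of the binary trie whose leaves encode the $2^{|V|}$ subsets $X\subseteq V$: at depth $d$ the trie branches on whether $v_d\in X$, so descending into a one-branch inserts $v_d$ into the current set and backtracking out of it removes $v_d$. First I would make precise that this traversal performs only $2\sum_{d=1}^{|V|}2^{d-1}=2(2^{|V|}-1)=O(2^{|V|})$ single-vertex insertions and deletions in total, because there are exactly $2^{d-1}$ one-branch edges entering depth $d$ and the DFS enters and leaves each of them once.

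The second step is to maintain the counters $out_X(v)$ incrementally along this traversal. Inserting or deleting a vertex $u$ changes $out_X(w)$ only for the in-neighbours $w$ of $u$, of which there are at most $|V|$, so one update costs $O(|V|)$ and the whole traversal spends $O(2^{|V|}\cdot|V|)$ time keeping every counter current. Crucially, $out_X(v)$ is kept for all $v\in V$, not merely for $v\in X$, since the attractor test queries vertices outside $X$.

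The third step is to check, at each leaf, the relevant property in $O(|V|)$ time via the stated equivalences: $\mathcal{A}(X)$ is an arena iff $out_X(v)\neq 0$ for every $v\in X$; it is a $\sigma$-trap iff $out_X(v)=out(v)$ for every $v\in X\cap V_\sigma$ and $out_X(v)>0$ for every $v\in X\cap V_{\bar\sigma}$; and for a $\bar\sigma$-trap $\mathcal{A}(X)$, one has $Attr_\sigma(X,\mathcal{A})=X$ iff $out_X(v)=0$ for every $v\in V_\sigma\setminus X$ and $out_X(v)<out(v)$ for every $v\in V_{\bar\sigma}\setminus X$. Each test is one scan over $V$ comparing a counter with $0$ or with $out(v)$, so it costs $O(|V|)$ per leaf and $O(2^{|V|}\cdot|V|)$ over all leaves; adding the maintenance cost yields the claimed $O(2^{|V|}\cdot|V|)$ bound for each enumeration.

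The part deserving the most care is the attractor equivalence, the only one of the three conditions that, on its face, refers to the iterative attractor computation rather than to a purely local one-step test. The clean way to dispatch it is to observe that $Attr_\sigma(X,\mathcal{A})=X$ holds exactly when $V\setminus X$ is a $\sigma$-trap: the forward direction is immediate because $V\setminus Attr_\sigma(X,\mathcal{A})$ is always a $\sigma$-trap, and for the converse, if $V\setminus X$ is a $\sigma$-trap then Player $\bar\sigma$ can keep the play outside $X$ forever, so no vertex of $V\setminus X$ is absorbed and $Attr_\sigma(X,\mathcal{A})=X$. Expressing ``$V\setminus X$ is a $\sigma$-trap'' through the counters gives precisely $out_X(v)=0$ for $v\in V_\sigma\setminus X$ and $out_X(v)<out(v)$ for $v\in V_{\bar\sigma}\setminus X$, so the test is local and no iteration is required. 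Finally I would confirm the bookkeeping invariant that each counter is decremented on backtracking, so that at every leaf the examined set $X$ and the stored values $out_X(v)$ are mutually consistent.
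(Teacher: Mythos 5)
Your proposal is correct and follows essentially the same route as the paper: a binary-tree (trie) traversal of all subsets with $O(2^{|V|})$ incremental single-vertex insertions and deletions, maintenance of the counters $out_X(v)$ for all $v\in V$ at $O(|V|)$ per update, and the same three local counter-based tests at each leaf. Your only addition is to explicitly verify the attractor equivalence via the observation that $Attr_\sigma(X,\mathcal{A})=X$ holds exactly when $V\setminus X$ is a $\sigma$-trap, a fact the paper asserts without proof (it follows from its earlier remark that the complement of a $\sigma$-attractor is always a $\sigma$-trap), so this is a welcome but inessential elaboration rather than a different approach.
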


Similar to the proofs of Lemmas \ref{L: P and Q implement} and \ref{L: P to winning region implement}, applying Algorithm 2 and Lemma \ref{L: all enumeration}, we get the following lemma. 

\begin{lemma}\label{L: P and Q implement strengthen}
    There exists an algorithm that computes $P$ and $Q$ for a M\"uller game in time $O(3^{|V|}\cdot |V|)$.
\end{lemma}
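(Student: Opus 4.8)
The plan is to reuse the combinatorial skeleton of the proof of Lemma~\ref{L: P and Q implement} and to show that every per-set and per-pair operation that cost $O(|E|)$ there can be brought down to $O(|V|)$ once we run Algorithm~2 and equip it with the edge-counters $out_X(v)$ of Lemma~\ref{L: all enumeration}. Correctness of the returned $P$ and $Q$ is not re-established here: it is exactly Lemma~\ref{L: P Q strengthened}, which guarantees that Algorithm~2 fills $P$ and $Q$ with the fully-won and the not-fully-won subarenas, respectively. So the whole argument is a running-time accounting.

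First I would bound the outer loop. Algorithm~2 iterates over $i=1,\dots,2^{|V|}-1$ and, for each $i$, tests whether $\mathcal A(S_i)$ is an arena. By the first item of Lemma~\ref{L: all enumeration}, this test, performed while traversing the binary tree and maintaining the counters $out_{S_i}(v)$, costs $O(|V|)$ amortised per set, so the enumeration of all arenas contributes only $O(2^{|V|}\cdot|V|)$, which is dominated by the target bound.

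Next I would account for the nested work. For each arena $S_i$ the algorithm scans the proper subsets $S_j\subsetneq S_i$, and the total number of such ordered pairs is $\sum_{k=1}^{|V|}\binom{|V|}{k}(2^{k}-1)<3^{|V|}$, exactly as in Lemma~\ref{L: P and Q implement}. The crucial difference is the cost of handling one pair. When $S_i\in\Omega$ we only test whether $\mathcal A(S_j)$ is a $0$-trap in $\mathcal G(S_i)$ and whether $S_j\in Q$; by the $\sigma$-trap characterisation of Lemma~\ref{L: all enumeration} the trap test is $O(|V|)$ through the counters and the membership test is $O(|V|)$ through the binary tree for $Q$. When $S_i\notin\Omega$ we use the strengthened conditions of Lemma~\ref{L:0 X notin Omega strengthened}: we test the $1$-trap condition, $S_j\in P$, the cardinality condition $|S_j|=|S_i|-1$, and the attractor-closedness $S_j=Attr_0(S_j,\mathcal A(S_i))$ together with $S_i\setminus S_j\in P$. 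Each of these is $O(|V|)$: the trap and attractor-closedness tests are the edge-count characterisations of Lemma~\ref{L: all enumeration}, the cardinality test is $O(|V|)$ (or $O(1)$ with a maintained count), and the memberships are binary-tree lookups. Hence every pair is processed in $O(|V|)$, the nested work is $O(3^{|V|}\cdot|V|)$, and adding the outer loop keeps the bound at $O(3^{|V|}\cdot|V|)$.

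The step I expect to be the main obstacle is showing that the attractor-closedness test $S_j=Attr_0(S_j,\mathcal A(S_i))$ really runs in $O(|V|)$ rather than $O(|E|)$. This is precisely where the strengthening matters: Lemma~\ref{L: 0 X notin Omega} forced Algorithm~1 to \emph{compute} the attractor $Attr_0(S_j,\mathcal A(S_i))$, an inherently $O(|E|)$ operation, whereas Lemma~\ref{L:0 X notin Omega strengthened} only asks whether $S_j$ is already attractor-closed, which by Lemma~\ref{L: all enumeration} reduces to checking $out_{S_j}(v)=0$ for $v\in(S_i\cap V_0)\setminus S_j$ and $out_{S_j}(v)<out_{S_i}(v)$ for $v\in(S_i\cap V_1)\setminus S_j$. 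The delicate point is that these tests are relative to the \emph{subarena} $\mathcal A(S_i)$, so the argument must carry the counters $out_{S_i}(v)$ alongside $out_{S_j}(v)$ and keep their maintenance within the amortised $O(|V|)$ budget as the nested enumeration proceeds; verifying that this bookkeeping fits the budget is the part that needs care, and it is exactly what lets us replace every $O(|E|)$ factor of Lemma~\ref{L: P and Q implement} by $O(|V|)$.
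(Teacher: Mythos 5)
Your proposal is correct and follows the paper's own route: the paper proves this lemma by exactly the argument you give, namely running Algorithm~2 so that the attractor \emph{computation} of Lemma~\ref{L: 0 X notin Omega} is replaced by the cardinality test and the attractor-closedness test of Lemma~\ref{L:0 X notin Omega strengthened}, and using the counter-based characterisations of Lemma~\ref{L: all enumeration} to do every per-pair check (arena, trap, closedness, membership in $P$, $Q$, $\Omega$) in $O(|V|)$ over the $<3^{|V|}$ pairs. Your explicit observation that the trap and closedness tests are relative to $\mathcal A(S_i)$, so the counters $out_{S_i}(v)$ must be maintained alongside $out_{S_j}(v)$ within the amortised budget, is a detail the paper leaves implicit but is handled exactly as you describe.
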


\begin{lemma}\label{L: P to winning region implement strengthen}
  There is an algorithm that,  
    given M\"uller game $\mathcal{G}$ and $P$, computes $Win_0(\mathcal{G})$ and $Win_1(\mathcal{G})$ in $O(2^{|V|}\cdot |V|)$.
\end{lemma}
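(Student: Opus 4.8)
The plan is to mirror the proof of Lemma~\ref{L: P to winning region implement}, but to replace the per-set $O(|E|)$ trap test by the amortized enumeration guaranteed by Lemma~\ref{L: all enumeration}. By Lemma~\ref{L: P to winning region}, computing $Win_0(\mathcal{G})$ and $Win_1(\mathcal{G})$ reduces to a single task: find the maximal 1-trap $\mathcal{A}(S_{max})$ in $\mathcal{G}$ with $S_{max}\in P$, if one exists. If it exists, output $Win_0(\mathcal{G})=S_{max}$ and $Win_1(\mathcal{G})=V\setminus S_{max}$; otherwise output $Win_0(\mathcal{G})=\emptyset$ and $Win_1(\mathcal{G})=V$. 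Since the union of two 1-traps is again a 1-trap and, by Lemma~\ref{L: 1-trap union}, membership in $P$ is preserved under such unions, the desired maximal 1-trap in $P$ is unique and is exactly the one of largest cardinality among all 1-traps in $P$; this makes ``maximal'' unambiguous for the algorithm.

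First I would invoke the trap-enumeration item of Lemma~\ref{L: all enumeration} to enumerate all 1-traps $\mathcal{A}(X)$ in $\mathcal{G}$ in total time $O(2^{|V|}\cdot|V|)$. The crucial ingredient is the counter $out_X(v)$, the number of edges from $v$ into $X$, maintained incrementally along the binary-tree traversal: testing the 1-trap conditions for a fixed $X$ becomes an $O(|V|)$ local check on these counters (for all $v\in X\cap V_1$, $out_X(v)=out(v)$, and for all $v\in X\cap V_0$, $out_X(v)>0$) rather than an $O(|E|)$ scan of all edges, while the counters themselves are updated in $O(2^{|V|}\cdot|V|)$ aggregate time over the whole traversal. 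During this same enumeration, for each 1-trap $X$ produced I would query the binary tree for $P$ to decide whether $X\in P$ in $O(|V|)$ time, and keep a running record of the 1-trap in $P$ of largest cardinality seen so far.

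When the enumeration finishes, the recorded set (if any) is $S_{max}$, and I would emit the winning regions as dictated by Lemma~\ref{L: P to winning region}. For the running time, the enumeration costs $O(2^{|V|}\cdot|V|)$ by Lemma~\ref{L: all enumeration}; there are at most $2^{|V|}$ 1-traps, and each triggers one $O(|V|)$ membership query and one $O(1)$ cardinality comparison, contributing a further $O(2^{|V|}\cdot|V|)$, so the total is $O(2^{|V|}\cdot|V|)$, as claimed.

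The only genuine obstacle is shaving the $|E|$ factor off the naive trap test, and this is handled entirely by Lemma~\ref{L: all enumeration}: the correctness of the reduction is immediate from Lemma~\ref{L: P to winning region}, so the remainder is essentially bookkeeping once the amortized enumeration is available. I would take care only to confirm that the $out_X(v)$ characterization of the 1-trap property used here is exactly the one established just before Lemma~\ref{L: all enumeration}, so that no hidden $O(|E|)$ work sneaks back into the per-set checks.
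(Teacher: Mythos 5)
Your proof is correct and matches the paper's intended argument, which likewise combines the reduction of Lemma~\ref{L: P to winning region} with the amortized $out_X(v)$-based enumeration of 1-traps from Lemma~\ref{L: all enumeration} and $O(|V|)$ binary-tree membership queries to $P$. The only cosmetic difference is that you select $S_{max}$ as the largest-cardinality 1-trap in $P$ (justified via Lemma~\ref{L: 1-trap union}), whereas the paper's Algorithm-1 version scans indices $i=2^{|V|}-1$ down to $1$ and takes the first hit; both are valid since every 1-trap in $P$ is contained in $S_{max}$.
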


By Lemmas \ref{L: P and Q implement strengthen} and \ref{L: P to winning region implement strengthen}, the following theorem is proved.

\begin{theorem}\label{Thm: solving muller game strengthen}
There exists an algorithm that solves    the M\"uller game $\mathcal{G}$ in time $\mathbf{O}(3^{|V|}\cdot |V|)$.
\end{theorem}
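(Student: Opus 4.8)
The plan is to obtain the theorem as the composition of the two preceding implementation lemmas, treating the decision problem as a two-phase computation. In the first phase I would run Algorithm 2 (the variant of Figure~\ref{F:Partition Muller game} in which the block handling $S_i \notin \Omega$ is replaced by the code of Figure~\ref{F:Partition Muller game strengthed}) to produce the partition of all non-empty subarenas into the collections $P$ and $Q$. In the second phase I would feed the resulting $P$ into the procedure of Lemma~\ref{L: P to winning region implement strengthen} to read off $Win_0(\mathcal{G})$ and $Win_1(\mathcal{G})$.

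For correctness, I would invoke Lemma~\ref{L: P Q strengthened}, which guarantees that after Algorithm 2 terminates, $P$ is exactly the family of non-empty subarenas $\mathcal{A}(S_i)$ on which Player 0 fully wins $\mathcal{G}(S_i)$, and $Q$ its complement. This already rests on the strengthened characterisation in Lemma~\ref{L:0 X notin Omega strengthened} together with Lemma~\ref{L:0 X in Omega}, so no separate correctness argument is needed at this level. Given a correct $P$, Lemma~\ref{L: P to winning region} tells me precisely how the winning regions are recovered: if no $S_i \in P$ induces a $1$-trap in $\mathcal{G}$ then $Win_0(\mathcal{G}) = \emptyset$ and $Win_1(\mathcal{G}) = V$, and otherwise the maximal such $1$-trap $S_{max}$ satisfies $Win_0(\mathcal{G}) = S_{max}$ and $Win_1(\mathcal{G}) = V \setminus S_{max}$. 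Thus scanning $P$ for the largest $1$-trap in $\mathcal{G}$ solves the game.

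For the running time I would simply add the two bounds. Lemma~\ref{L: P and Q implement strengthen} gives $O(3^{|V|} \cdot |V|)$ for the first phase, and Lemma~\ref{L: P to winning region implement strengthen} gives $O(2^{|V|} \cdot |V|)$ for the second. Since $2^{|V|} \le 3^{|V|}$, the total is $O(3^{|V|} \cdot |V|)$, as claimed.

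I do not expect a genuine obstacle at this final step, since the theorem is a bookkeeping composition of established lemmas. The real technical work has already been discharged earlier: the reduction from the $O(3^{|V|}(|V|+|E|))$ bound of Theorem~\ref{Thm: solving muller game} to $O(3^{|V|} |V|)$ depends on eliminating the per-pair attractor and trap computations of Algorithm 1. The key device, recorded in Lemma~\ref{L: all enumeration}, is to maintain for each vertex $v$ the counter $out_X(v)$ of outgoing edges into the current set $X$ while traversing the binary-tree encoding of subsets; this lets each arena test, trap test, and the test $Attr_0(S_j,\mathcal{A}(S_i)) = S_i$ be performed in amortised $O(|V|)$ time rather than $O(|E|)$. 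If I had to re-prove the theorem from scratch, that amortisation, together with its interaction with the strengthened Lemma~\ref{L:0 X notin Omega strengthened} which replaces the general attractor condition by the two cheap cases $|Y| = |X|-1$ and $Y = Attr_0(Y,\mathcal{A}(X))$, is the only place where care would be required.
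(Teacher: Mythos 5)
Your proposal matches the paper's proof exactly: the theorem is obtained by composing Lemma~\ref{L: P and Q implement strengthen} (computing $P$ and $Q$ via Algorithm 2 in $O(3^{|V|}\cdot|V|)$, with correctness from Lemma~\ref{L: P Q strengthened}) with Lemma~\ref{L: P to winning region implement strengthen} (recovering $Win_0(\mathcal{G})$ and $Win_1(\mathcal{G})$ from $P$ in $O(2^{|V|}\cdot|V|)$), and your supporting remarks about the $out_X(v)$ counters and Lemma~\ref{L: all enumeration} correctly reflect how the paper discharges the implementation bounds. No gaps; the argument is correct and essentially identical to the paper's.
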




\section{Deciding explicitly given M\"uller games in polynomial time}

This is a brief section where we describe our polynomial time algorithm that decides explicitly given M\"uller games. Currently, it is the best algorithm in terms of running times of algorithms that solve M\"uller games. For instance, when the input is exponential in the size of the arena, our algorithm outperforms with running time $O(3^{|V|}\cdot |V|)$ rather than the best known running time $O( |\Omega|\cdot(|V|+|\Omega|)\cdot |V_0|\log |V_0|  )$ from \cite{liang_et_al:LIPIcs.ESA.2023.79}. Here is the algorithm. 

On input $\mathcal G$ M\"uller game, run the following two algorithms in parallel: 
\begin{itemize}
\item Run any of our algorithms, say Algorithm 1, on $\mathcal G$, and
\item Run the polynomial time algorithm from \cite{liang_et_al:LIPIcs.ESA.2023.79} on $\mathcal G$.
\end{itemize}
Stop, once any of these algorithms outputs $W_0$ and $W_1$. 

\section{Applications}\label{S:Applications}

In this section we explain how our methods for deciding M\"uller games can be extended to all other regular games. To do so, we recast all our results in Sections \ref{S:SMG} and \ref{S:Implementations}  with an eye towards the rest of the regular games.

Lemma \ref{L: exists cannot win, exists opponent wins} and Corollary \ref{C: all cannot win, opponent wins all} and their proofs stay unchanged for all regular games.

In Lemma \ref{L: 0 no 1-trap}, Corollary \ref{C:1 X not in Omega}, Lemma \ref{L: 0 X notin Omega} and Lemma \ref{L:0 X notin Omega strengthened},  the assumption 
``$X\not \in \Omega$" is changed to the following:

\begin{itemize}
\item For McNaughton games: ``$X\cap W \not \in \Omega$'',
\item For coloured M\"uller games: ``$c(X) \not \in \Omega$'',

\item For KL games: ``For $i\in \{1,\ldots,t\}$ we have if $u_i\in X$ then $X\not\subseteq S_i$''.

\item For Rabin games: ``For $i\in \{1, \ldots, k\}$ we have if $X\cap U_i\neq \emptyset$ then $X\cap V_i\neq \emptyset$''. 

\item For Streett games: ``There is an $i\in\{1, \ldots, k\} $ such that $X\cap U_i \neq \emptyset$ and $X\cap V_i=\emptyset$''. 
\end{itemize}
Then the proofs of all the lemmas and the corollary with these new assumptions are carried out verbatim for each of these cases. Note that all requirements put on $X$ are transformations of  the winning conditions to M\"uller game winning conditions stated for Player 1. 
Similarly, in Lemma \ref{L:0 X in Omega}
the assumption 
``$X\in \Omega$" is changed to the following:

\begin{itemize}
\item For McNaughton games: ``$X\cap W \in \Omega$'',
\item For coloured M\"uller games: ``$c(X) \in \Omega$'',

\item For KL games: ``There is an $i\in\{1, \ldots, t\} $ such that $u_i\in X$ and $X\subseteq S_i$''.

\item For Rabin games: ``There is an $i\in\{1, \ldots, k\} $ such that $X\cap U_i \neq \emptyset$ and $X\cap V_i=\emptyset$''.
\item For Streett games: ``For $i\in \{1, \ldots, k\}$ we have if $X\cap U_i\neq \emptyset$ then $X\cap V_i\neq \emptyset$''. 
\end{itemize}
Then the proof of the lemma with these new assumptions is carried out word by word for each of the cases. Just as above, all the conditions put in $X$ are essentially transformation of the games to M\"uller games stated for Player 0. 

It is not too hard to see that for McNaughton games and coloured M\"uller games, we can easily recast the algorithms presented in Section \ref{S:Implementations}. There will be no influence on running time complexity. Hence, we get the following complexity-theoretic result as in Theorem \ref{Thm: solving muller game strengthen}:

\begin{theorem}
There exist algorithms that decide McNaughton and coloured M\"uller games $\mathcal{G}$ in time $\mathbf{O}(3^{|V|}\cdot |V|)$. \qed
\end{theorem}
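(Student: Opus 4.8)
The plan is to mirror the development of Sections \ref{S:SMG} and \ref{S:Implementations} verbatim, relying on the observation already stated in the excerpt: for McNaughton games and coloured M\"uller games, the only change is in how membership ``$X\in\Omega$'' is replaced by the appropriate winning predicate (``$X\cap W\in\Omega$'' and ``$c(X)\in\Omega$'' respectively), and all the structural lemmas (Lemma \ref{L: exists cannot win, exists opponent wins}, Corollary \ref{C: all cannot win, opponent wins all}, Lemma \ref{L: 0 no 1-trap}, Corollary \ref{C:1 X not in Omega}, Lemma \ref{L:0 X in Omega}, Lemma \ref{L: 0 X notin Omega}, Lemma \ref{L:0 X notin Omega strengthened}) carry over word for word. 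First I would note that these games are symmetric, exactly as M\"uller games are, so the duality used throughout Section \ref{S:SMG} (e.g.\ the derivation of Corollary \ref{C:1 X not in Omega} from Lemma \ref{L:0 X in Omega}) remains valid: the complement winning condition is again a winning condition of the same type for the opponent.

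Next I would check that the winning predicates can be evaluated within the time budget the proof of Lemma \ref{L: P and Q implement strengthen} allows. For coloured M\"uller games, given a subset $X$ one computes $c(X)\subseteq C$ and tests membership in $\Omega$; for McNaughton games one computes $X\cap W$ and tests membership. The crucial point is that we may precompute a binary tree over $\Omega$ (over subsets of $C$, respectively of $W$) so that each membership query costs $O(|V|)$, exactly as $\Omega$-membership costs $O(|V|)$ in the original M\"uller game implementation. Maintaining $c(X)$ or $X\cap W$ incrementally along the binary-tree traversal of subsets of $V$ adds only $O(1)$ work per insertion/deletion, hence $O(2^{|V|}\cdot|V|)$ in total, which is absorbed into the stated bound. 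Thus the enumeration machinery of Lemma \ref{L: all enumeration} and the attractor/trap tests, all phrased purely in terms of the arena $\mathcal{A}$, are untouched.

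I would then state the analogue of Lemma \ref{L: P Q strengthened}: running Algorithm 2 with the substituted winning predicate correctly partitions all non-empty subarenas $\mathcal{A}(S_i)$ into those Player 0 fully wins ($P$) and those Player 0 cannot fully win ($Q$). Its proof is the induction of Lemma \ref{L: P Q} with ``$S_i\in\Omega$'' replaced by the corresponding predicate and invoking the recast versions of Lemma \ref{L:0 X in Omega} and Lemma \ref{L:0 X notin Omega strengthened}. Finally, Lemma \ref{L: P to winning region} and its implementation Lemma \ref{L: P to winning region implement strengthen} depend only on the structure of $P$ (the union of $1$-traps fully won by Player 0) and on the arena, not on how winning is decided, so they transfer unchanged. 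Combining the recast analogues of Lemma \ref{L: P and Q implement strengthen} ($O(3^{|V|}|V|)$) and Lemma \ref{L: P to winning region implement strengthen} ($O(2^{|V|}|V|)$) yields the claimed $O(3^{|V|}\cdot|V|)$ bound.

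The only place requiring genuine (if mild) care, and hence the main obstacle, is verifying that the winning-predicate evaluation genuinely stays within $O(|V|)$ per query after the substitution: for coloured games one must confirm that building and querying a binary tree indexed by subsets of $C$ (rather than $V$) does not inflate the cost, and that computing $c(X)$ incrementally is sound when several vertices share a colour; for McNaughton games one must confirm the intersection with $W$ behaves analogously. Once this bookkeeping is discharged, every other step is a verbatim transcription of the M\"uller-game argument, so I would keep the proof short and simply point to the recast lemmas, concluding with \qed.
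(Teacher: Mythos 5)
Your proposal is correct and follows essentially the same route as the paper, which likewise recasts the lemmas of Sections \ref{S:SMG} and \ref{S:Implementations} verbatim with ``$X\in\Omega$'' replaced by ``$X\cap W\in\Omega$'' or ``$c(X)\in\Omega$'' and observes that the Algorithm~2 machinery is unaffected, yielding the same $O(3^{|V|}\cdot|V|)$ bound. Your extra bookkeeping on maintaining $c(X)$ and $X\cap W$ incrementally within $O(|V|)$ per membership query is a point the paper leaves implicit (``it is not too hard to see''), but it is consistent with, not a departure from, the paper's argument.
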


Note that the algorithms presented in Section \ref{S:Implementations} can also be applied to KL, Rabin and Streett games. However, one needs to be careful with the parameters involved. They add additional running time costs.
Namely, the algorithms should verify the assumptions, put on the sets $X$, dictated by KL, Rabin and Streett conditions.  


We start with the transformation from KL games to M\"uller games. 
Let $\mathcal{G}=(\mathcal{A},(u_1, S_1), \ldots, (u_t, S_t))$ be a KL game. Define the following M\"uller condition
set $\Omega'$: \ 
$X\in \Omega' \ \mbox{if and only if for some pair $(u_i, S_i)$ we have} \  u_i\in X\mbox{ and } X\subseteq S_i$. 

\begin{lemma}
    The transformation from KL games to M\"uller games takes $O(3^{|V|}|V|^2)$ time.
\end{lemma}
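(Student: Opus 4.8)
The plan is to produce the explicit Müller condition $\Omega'$ stored in the same binary tree data structure used in Section~\ref{S:Implementations}, so that the downstream Müller-game algorithm can query membership of any $X$ in $O(|V|)$ time. The only real difficulty is that the index $t$ can be exponential in $|V|$, so a direct test of ``$X\in\Omega'$'' that scans all $t$ pairs for each of the $2^{|V|}$ candidate sets $X$ would cost $O(2^{|V|}\,t\,|V|)$, which is not acceptable. First I would therefore collapse the pairs into a table indexed by the second coordinate. Concretely, I maintain a map $g$ assigning to each $S\subseteq V$ an $|V|$-bit mask $g[S]$, and in one pass over the input pairs I set the bit of $u_i$ in $g[S_i]$ whenever $u_i\in S_i$ (pairs with $u_i\notin S_i$ are irrelevant, since $u_i\in X\subseteq S_i$ forces $u_i\in S_i$). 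Reading and inserting the pairs costs $O(t\,|V|)$; as there are at most $|V|\cdot 2^{|V|}$ distinct pairs, this is bounded by $O(2^{|V|}|V|^2)$.

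After this preprocessing, correctness reduces to a reformulation that is immediate from the definition of $\Omega'$: a set $X$ lies in $\Omega'$ if and only if there is some $S\supseteq X$ with $g[S]\cap X\neq\emptyset$, equivalently $\bigl(\bigcup_{S\supseteq X} g[S]\bigr)\cap X\neq\emptyset$. The second step is to compute, for every $X$, the aggregated mask $h[X]=\bigcup_{S\supseteq X} g[S]$. This can be done by directly enumerating every pair $X\subseteq S$ and OR-ing $g[S]$ into $h[X]$; the number of such pairs is $\sum_{X\subseteq V}2^{|V|-|X|}=\sum_{k=0}^{|V|}\binom{|V|}{k}2^{|V|-k}=3^{|V|}$, and each OR of two $|V|$-bit masks costs $O(|V|)$, for a total of $O(3^{|V|}|V|)$. (A superset zeta transform over the subset lattice computes the same $h$ in $O(2^{|V|}|V|^2)$, which is also within budget.)

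Finally, for each $X$ I test $h[X]\cap X\neq\emptyset$ in $O(|V|)$ and, when it holds, insert $X$ into the binary tree representing $\Omega'$ in $O(|V|)$; over all $X$ this is $O(2^{|V|}|V|)$. Summing the three phases gives $O(2^{|V|}|V|^2)+O(3^{|V|}|V|)+O(2^{|V|}|V|)=O(3^{|V|}|V|)$, which is comfortably within the claimed $O(3^{|V|}|V|^2)$ bound. The main obstacle, and the one the collapse step is designed to overcome, is decoupling the running time from the potentially exponential parameter $t$: once the pairs are folded into the per-superset table $g$, the cost is governed entirely by the superset aggregation over all $X\subseteq S$, whose total size is exactly $3^{|V|}$ rather than anything depending on $t$.
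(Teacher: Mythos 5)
Your proposal is correct, but it proves the lemma by a genuinely different route than the paper. The paper's proof runs the enumeration in the forward direction: it iterates over the winning pairs $(u_i,S_i)$ and, for each pair, inserts every witness set $X$ with $u_i\in X\subseteq S_i$ into the binary tree for $\Omega'$; the bound comes from grouping pairs by their first coordinate $v$ and observing that the distinct second coordinates form a subset of $2^V$, so each vertex accounts for at most $\sum_{S\subseteq V}2^{|S|}=3^{|V|}$ insertions, giving $|V|\cdot 3^{|V|}$ insertions at $O(|V|)$ each, i.e.\ $O(3^{|V|}|V|^2)$. You instead invert the computation: you fold the pairs into the per-superset masks $g[S]$, aggregate $h[X]=\bigcup_{S\supseteq X}g[S]$ over the $3^{|V|}$ subset--superset incidences, and decide $X\in\Omega'$ by the test $h[X]\cap X\neq\emptyset$ --- a reformulation that is indeed equivalent to the definition of $\Omega'$ (the discarded pairs with $u_i\notin S_i$ can never be witnesses). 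Both arguments ultimately rest on the same $3^{|V|}$ incidence count, but your version buys two things the paper's does not: the running time $O(3^{|V|}|V|)$ beats the claimed bound by a factor of $|V|$, and the table $g$ deduplicates repeated pairs automatically, making the cost manifestly independent of $t$ beyond the unavoidable $O(t|V|)$ input scan --- a point the paper handles only implicitly via its grouping of pairs by first coordinate. What the paper's forward enumeration buys in exchange is simplicity: it needs no auxiliary masks or superset aggregation, only the binary tree already introduced in Section~\ref{S:Implementations}, and it builds $\Omega'$ incrementally pair by pair.
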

\begin{proof}
     We apply a binary tree to maintain $\Omega'$. Then enumerate all pairs $(u_i, S_i)$ and add all $X$ with $u_i\in X$ and $X\subseteq S_i$ into $\Omega'$. Let $\mathcal{S}_i$ be the set of all $S_{i,j}\subseteq V$ so that $(v_i, S_{i,j})$ is a winning condition. Since $\mathcal{S}_i\subseteq 2^{V}$, for all pairs $(u_k, S_k)$ with $u_k=v_i$, there are at most  $3^{|V|}$ additions of $X$s. Therefore, the transformation takes $O(3^{|V|}|V|^2)$ time in total. 
\end{proof}
As an immediate corollary we get the following complexity-theoretic result for KL games. 
\begin{theorem} \label{C:KL-M}
 There exists an algorithm that, given a KL game $\mathcal G$, decides $\mathcal G$ in time 
 $O(3^{|V|}|V|^2)$. \qed 
\end{theorem}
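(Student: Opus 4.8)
The plan is to combine the KL-to-M\"uller transformation with the M\"uller-solving algorithm from Theorem~\ref{Thm: solving muller game strengthen}. First I would invoke the preceding lemma to build, in time $O(3^{|V|}|V|^2)$, the M\"uller condition set $\Omega'$ defined by $X\in\Omega'$ iff there is a pair $(u_i,S_i)$ with $u_i\in X$ and $X\subseteq S_i$. This yields an explicitly given M\"uller game $\mathcal{G}'=(\mathcal{A},\Omega')$ on the \emph{same} arena $\mathcal A$.

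Next I would argue that $\mathcal G$ and $\mathcal G'$ are the \emph{same} game in the sense that they induce identical winning conditions on every play. Indeed, for any play $\rho$ with $\mathsf{Inf}(\rho)=X$, Player~0 wins $\mathcal G$ iff some pair $(u_i,S_i)$ satisfies $u_i\in X$ and $X\subseteq S_i$, which is precisely the statement $X\in\Omega'$. Hence Player~0 wins $\rho$ in $\mathcal G$ exactly when Player~0 wins $\rho$ in $\mathcal G'$. Since the winner of each play coincides across the two games, the winning regions coincide: $Win_\sigma(\mathcal G)=Win_\sigma(\mathcal G')$ for $\sigma\in\{0,1\}$. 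Therefore deciding $\mathcal G$ reduces to deciding $\mathcal G'$.

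Finally I would apply Theorem~\ref{Thm: solving muller game strengthen} to $\mathcal G'$, which decides the M\"uller game in time $O(3^{|V|}\cdot|V|)$. Adding the two costs gives total running time $O(3^{|V|}|V|^2)+O(3^{|V|}|V|)=O(3^{|V|}|V|^2)$, which dominates with the transformation step. This matches the claimed bound and closes the proof.

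The only delicate point, and the step I expect to require the most care, is the correctness of the equivalence $Win_\sigma(\mathcal G)=Win_\sigma(\mathcal G')$: one must verify that the rewritten condition $\Omega'$ captures the KL winning condition on \emph{every} subset $X\subseteq V$, not merely on the sets that arise as $\mathsf{Inf}(\rho)$ for some play. Because the definition of $\Omega'$ quantifies over all subsets in exactly the same way the KL condition is stated, this is immediate, but it is worth stating explicitly so that the reduction is seen to preserve winning regions rather than merely the value of a single play. Once that equivalence is in hand, the running-time bookkeeping is routine and the theorem follows.
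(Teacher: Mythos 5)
Your proposal is correct and matches the paper's own argument: the paper likewise obtains the theorem as an immediate consequence of the $O(3^{|V|}|V|^2)$-time transformation to the M\"uller condition set $\Omega'$ followed by the $O(3^{|V|}\cdot|V|)$ M\"uller-solving algorithm of Theorem~\ref{Thm: solving muller game strengthen}. Your explicit verification that the winning regions coincide (since both games live on the same arena and agree on the winner of every play) is a point the paper leaves implicit, but it is the same reduction, not a different route.
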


Now we transform Rabin games $\mathcal G$ to M\"uller games. Direct translation to M\"uller games is costly as each pair $(U_i,V_i)$ in the Rabin winning condition defines the collection of sets $X$ such that $X\cap U_i\neq \emptyset$ and $X\cap V_i=\emptyset$. The collection of all these sets $X$ form the M\"uller condition set $\Omega$. 
As the index $k$ is $O(2^{2|V|})$, the direct transformation is expensive. Our goal is to avoid this cost through $KL$ games.  The following lemma is easy:

\begin{lemma}\label{L: Ui Vi to Yi Zi}
     Let $X\subseteq V$ and let $(U_i, V_i)$ be a winning pair in Rabin game $\mathcal G$.
     Set $Y_i=U_i\setminus V_i$ and $Z_i=V\setminus V_i$. Then $X\cap U_i\ne\emptyset$ and $X\cap V_i=\emptyset$ if and only if $X\cap Y_i\ne\emptyset$ and $X\subseteq Z_i$.
\end{lemma}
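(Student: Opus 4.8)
The plan is to prove both directions of the biconditional by showing the two described conditions characterize the same sets $X$. The statement is purely set-theoretic, so no game-theoretic machinery is needed; I would simply unfold the definitions of $Y_i=U_i\setminus V_i$ and $Z_i=V\setminus V_i$ and verify the equivalence of the two conjunctions.

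First I would prove the forward direction. Assume $X\cap U_i\ne\emptyset$ and $X\cap V_i=\emptyset$. To see $X\cap Y_i\ne\emptyset$, pick any $v\in X\cap U_i$; since $X\cap V_i=\emptyset$ we have $v\notin V_i$, so $v\in U_i\setminus V_i=Y_i$, hence $v\in X\cap Y_i$. To see $X\subseteq Z_i$, note that $X\cap V_i=\emptyset$ means every element of $X$ lies in $V\setminus V_i=Z_i$.

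Next I would prove the reverse direction. Assume $X\cap Y_i\ne\emptyset$ and $X\subseteq Z_i$. From $X\subseteq Z_i=V\setminus V_i$ we immediately get $X\cap V_i=\emptyset$. From $X\cap Y_i\ne\emptyset$, pick $v\in X\cap Y_i$; since $Y_i=U_i\setminus V_i\subseteq U_i$, we have $v\in U_i$, so $v\in X\cap U_i$ and thus $X\cap U_i\ne\emptyset$.

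I do not anticipate any real obstacle: the result is a routine unfolding of the set operations, and the only thing to be careful about is using $X\cap V_i=\emptyset$ correctly in the forward direction to place the witness into $Y_i$ rather than merely $U_i$. The motivation, which I would state briefly, is that this lemma lets us replace each Rabin pair $(U_i,V_i)$ with a pair $(Y_i,Z_i)$ of the form ``$X$ meets $Y_i$ and $X$ is contained in $Z_i$,'' which is exactly the shape of a KL winning condition, thereby enabling the cheaper transformation of Rabin games through KL games rather than by directly enumerating the exponentially many sets $X$ forming $\Omega$.
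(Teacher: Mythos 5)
Your proof is correct: both directions are routine unfoldings of $Y_i=U_i\setminus V_i$ and $Z_i=V\setminus V_i$, and you handle the one subtle point (using $X\cap V_i=\emptyset$ to place the witness in $Y_i$ rather than just $U_i$) properly. The paper states this lemma without proof, calling it easy, and your argument is exactly the verification it intends.
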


Thus, we can replace the winning condition $(U_1, V_1), \ldots (U_k, V_k)$ in a given Rabin game  to the equivalent winning condition $(Y_1, Z_1), \ldots, (Y_k, Z_k)$. We still have Rabin winning condition but we use this 
new winning condition $(Y_1, Z_1), \ldots, (Y_k, Z_k)$ to build the desired $KL$ game:  

\begin{lemma}
    The transformation from Rabin games to KL games takes $O(k|V|^2)$ time.
\end{lemma}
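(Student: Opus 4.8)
The plan is to build the KL game directly from the reformulated Rabin condition supplied by Lemma \ref{L: Ui Vi to Yi Zi}, and then to account for the cost one pair at a time. First I would invoke Lemma \ref{L: Ui Vi to Yi Zi} to replace each original pair $(U_i,V_i)$ by the equivalent pair $(Y_i,Z_i)$ with $Y_i=U_i\setminus V_i$ and $Z_i=V\setminus V_i$, so that for every $X\subseteq V$ we have $X\cap U_i\neq\emptyset$ and $X\cap V_i=\emptyset$ if and only if $X\cap Y_i\neq\emptyset$ and $X\subseteq Z_i$. Thus Player 0's Rabin winning condition on $X=\mathsf{Inf}(\rho)$ is equivalent to the disjunction over $i$ of ``$X\cap Y_i\neq\emptyset$ and $X\subseteq Z_i$''.

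The central step is a quantifier rewrite: I would observe that $X\cap Y_i\neq\emptyset$ is the same as $\exists u\in Y_i\,(u\in X)$, so that ``$X\cap Y_i\neq\emptyset$ and $X\subseteq Z_i$'' is exactly ``$\exists u\in Y_i$ such that $u\in X$ and $X\subseteq Z_i$''. Each such $u$ yields precisely a KL condition of the form $(u,Z_i)$, whose meaning is $u\in X$ and $X\subseteq S$ with $S=Z_i$. Accordingly I would define the KL game on the same arena $\mathcal{A}$ with winning conditions $\{(u,Z_i)\mid i\in\{1,\ldots,k\},\ u\in Y_i\}$, of index $t=\sum_{i=1}^{k}|Y_i|\le k|V|$. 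Correctness then follows by assembling the chain of equivalences: for any play's infinity set $X$, Player 0 wins the Rabin game iff some reformulated pair satisfies $X\cap Y_i\neq\emptyset$ and $X\subseteq Z_i$, iff some constructed KL pair $(u,Z_i)$ satisfies $u\in X$ and $X\subseteq Z_i$, iff Player 0 wins the KL game.

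For the running time I would account separately for the two phases. Computing $Y_i$ and $Z_i$ from $(U_i,V_i)$ is a constant number of set operations on subsets of $V$, costing $O(|V|)$ per index and hence $O(k|V|)$ in total. Emitting the KL pairs costs $O(|V|)$ for each pair, since recording $(u,Z_i)$ requires writing down the set $Z_i$ of size at most $|V|$; as there are $t\le k|V|$ pairs, this phase costs $O(t\cdot|V|)=O(k|V|\cdot|V|)=O(k|V|^2)$. Summing the two phases gives the claimed $O(k|V|^2)$ bound.

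I expect the only point needing care, rather than a genuine obstacle, to be the correctness of the quantifier split: one must check that replacing the single requirement $X\cap Y_i\neq\emptyset$ by a separate KL pair for each $u\in Y_i$ preserves the winning condition on \emph{every} infinity set, not merely on some of them, which is exactly where the disjunctive semantics of the KL condition matches the existential quantifier. The cost analysis itself is routine once one notes that the dominant contribution is writing the $O(|V|)$-sized set $Z_i$ for each of the at most $k|V|$ pairs.
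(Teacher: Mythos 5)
Your proposal is correct and matches the paper's proof: both apply Lemma \ref{L: Ui Vi to Yi Zi} to pass to the pairs $(Y_i,Z_i)$, split the existential over $Y_i$ into one KL pair $(u,Z_i)$ per $u\in Y_i$, and bound the cost by at most $k|V|$ emitted pairs each of size $O(|V|)$. The only cosmetic difference is that the paper attributes the bookkeeping cost to maintaining the conditions in a binary tree, while you count the output-writing cost directly; the bound and the construction are the same.
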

\begin{proof}
    Enumerate all pairs $(U_i, V_i)$, compute $Y_i=U_i\setminus V_i$, $Z_i=V\setminus V_i$ and add all pairs $(u_j, S_j)$ with $u_j\in Y_i$ and $S_j=Z_i$ into KL conditions. By applying binary trees, the transformation takes $O(k|V|^2)$. This transformation preserves the winning sets $W_0$ and $W_1$. 
\end{proof}

Thus, the transformed KL games can be viewed as a compressed version of Rabin games. 

\begin{corollary}
    The transformation from Rabin games to M\"uller games takes $O((k+3^{|V|})|V|^2)$ time.
\end{corollary}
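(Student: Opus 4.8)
The final statement is the corollary asserting that transforming Rabin games to M\"uller games takes $O((k+3^{|V|})|V|^2)$ time. This follows by composing the two preceding transformation lemmas, so the plan is simply to chain them together and add their costs.

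\medskip

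The plan is to proceed in two stages, exactly mirroring the pipeline Rabin $\to$ KL $\to$ M\"uller that the preceding lemmas have set up. First I would invoke the lemma stating that the transformation from Rabin games to KL games takes $O(k|V|^2)$ time: given the Rabin winning condition $(U_1,V_1),\ldots,(U_k,V_k)$, compute for each pair the sets $Y_i=U_i\setminus V_i$ and $Z_i=V\setminus V_i$ (justified as equivalent by Lemma \ref{L: Ui Vi to Yi Zi}), and emit the KL pairs $(u_j,S_j)$ with $u_j\in Y_i$ and $S_j=Z_i$. This produces an equivalent KL game in $O(k|V|^2)$ time. Note that the number $t$ of KL winning conditions produced is $O(k|V|)$, since each of the $k$ Rabin pairs generates at most $|V|$ KL pairs (one per vertex of $Y_i$).

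\medskip

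Second, I would feed the resulting KL game into the earlier lemma that bounds the transformation from KL games to M\"uller games by $O(3^{|V|}|V|^2)$ time. The key subtlety here is that this latter bound is \emph{independent of $t$}: the proof of that lemma groups the KL pairs by their first coordinate $u_i=v_i$, and since the associated collections $\mathcal{S}_i\subseteq 2^V$ contribute at most $3^{|V|}$ total additions of sets $X$ across all pairs sharing a fixed first coordinate, the blowup is absorbed into the $3^{|V|}$ term regardless of how many KL pairs were created. Hence even though our intermediate KL game may have $t=O(k|V|)$ pairs, the second stage still runs in $O(3^{|V|}|V|^2)$.

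\medskip

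Adding the two costs gives $O(k|V|^2)+O(3^{|V|}|V|^2)=O((k+3^{|V|})|V|^2)$, which is the claimed bound; correctness of the composite transformation follows because each stage preserves the winning sets $W_0$ and $W_1$. The only point requiring care---and the one I would flag as the main obstacle to a naive argument---is resisting the temptation to bound the second stage in terms of the intermediate parameter $t$, which could be as large as $O(k|V|)$ and would spuriously reintroduce a factor of $k$ into the exponential term; the whole benefit of routing through KL games, as the preceding discussion emphasises, is precisely that the $3^{|V|}$ bound on the KL$\to$M\"uller step does not depend on the index $t$, so the $k$-dependence stays confined to the cheap first stage.
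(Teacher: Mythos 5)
Your proof is correct and follows essentially the same route as the paper, which obtains the corollary by chaining the two immediately preceding lemmas: the Rabin-to-KL transformation in $O(k|V|^2)$ time and the KL-to-M\"uller transformation in $O(3^{|V|}|V|^2)$ time, summing to $O((k+3^{|V|})|V|^2)$. Your additional observation---that the KL-to-M\"uller bound is independent of the intermediate index $t=O(k|V|)$ because the pairs are grouped by first coordinate and the $3^{|V|}$ bound absorbs all additions per coordinate---is exactly the point the paper's KL lemma relies on, so it is a faithful (and slightly more explicit) rendering of the intended argument.
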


Note that deciding Rabin games is equivalent to deciding Streett games. Thus, combining the arguments above, we get the following complexity-theoretic result:
\begin{theorem}
 There exist algorithms that decide Rabin and Streett games $\mathcal{G}$ in time $\mathbf{O}((k+3^{|V|})\cdot |V|^2)$. \qed
\end{theorem}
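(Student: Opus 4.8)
The plan is to decide Rabin games by composing the two transformations established above and then invoking the M\"uller-game solver of Theorem~\ref{Thm: solving muller game strengthen}, and to handle Streett games through their duality with Rabin games. I expect essentially no new ideas are needed: the theorem is a matter of chaining the pieces already in place and tallying the costs.

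Concretely, given a Rabin game $\mathcal{G}=(\mathcal{A},(U_1,V_1),\ldots,(U_k,V_k))$, I would first apply Lemma~\ref{L: Ui Vi to Yi Zi} to rewrite the winning condition as the equivalent pairs $(Y_1,Z_1),\ldots,(Y_k,Z_k)$ with $Y_i=U_i\setminus V_i$ and $Z_i=V\setminus V_i$, and then build the associated KL game, which by the corresponding transformation lemma costs $O(k|V|^2)$ and preserves $Win_0$ and $Win_1$. Next I would convert this KL game into a M\"uller game via the $\Omega'$ construction, costing $O(3^{|V|}|V|^2)$. Finally I would run the solver of Theorem~\ref{Thm: solving muller game strengthen} on the resulting M\"uller game in time $O(3^{|V|}|V|)$. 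Summing the three phases gives $O(k|V|^2)+O(3^{|V|}|V|^2)+O(3^{|V|}|V|)=O((k+3^{|V|})|V|^2)$, the claimed bound for Rabin games.

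For Streett games I would exploit the duality recorded after Definition~\ref{dfn:WinCon}: Player~0 wins the Streett game from a position $v$ if and only if Player~1 wins the Rabin game on the same arena with the same pairs $(U_i,V_i)$ from $v$. Hence I would run the Rabin-solving procedure above on the identical data and then simply exchange the two computed winning regions, so that $Win_0$ of the Streett game equals $Win_1$ of the Rabin game and vice versa. Swapping the regions is $O(|V|)$ work, so the overall running time is unchanged.

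The point requiring care is correctness rather than timing: I must verify that each step of the chain preserves the winning positions of both players. The KL-to-M\"uller step is immediate from the definition of $\Omega'$ together with the reformulated lemmas of Section~\ref{S:Applications}, since $\mathsf{Inf}(\rho)\in\Omega'$ matches the KL winning condition on $\mathsf{Inf}(\rho)$ verbatim. The Rabin-to-KL step rests on Lemma~\ref{L: Ui Vi to Yi Zi}, which asserts that $X\cap U_i\neq\emptyset$ and $X\cap V_i=\emptyset$ is equivalent to $X\cap Y_i\neq\emptyset$ and $X\subseteq Z_i$; this is where I would confirm that introducing one KL pair $(u_j,Z_i)$ for each $u_j\in Y_i$ faithfully captures the single Rabin pair, so that exactly the same infinity sets remain winning for Player~0. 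Once these equivalences are established, the winning sets are identical along the whole chain, and the duality argument transfers the result to Streett games without further cost.
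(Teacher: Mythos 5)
Your Rabin half is correct and is exactly the paper's route: rewrite each pair via Lemma~\ref{L: Ui Vi to Yi Zi}, build the KL game in $O(k|V|^2)$, expand it to a M\"uller game with condition family $\Omega'$ in $O(3^{|V|}|V|^2)$, and run the solver of Theorem~\ref{Thm: solving muller game strengthen}; your cost tally matches the paper's.

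The Streett half, however, contains a genuine error. The complementarity of Rabin and Streett conditions holds at the level of \emph{plays}, but it does not license the recipe you give, because the two players control different sets of vertices. Running the Rabin solver ``on the identical data'' computes the region from which Player~0, moving at $V_0$-positions, can force the Rabin condition; its complement $Win_1$ is the region from which \emph{Player~1}, moving at $V_1$-positions, can force the Streett condition. That is not $Win_0$ of the Streett game, which asks whether \emph{Player~0} can force the Streett condition. Concretely, take $V_0=\{a\}$, $V_1=\{b,c\}$, edges $a\to b$, $a\to c$, $b\to a$, $c\to a$, and the single pair $(\{b\},\{c\})$. Player~0 fully wins the Rabin game (always move to $b$, so $\mathsf{Inf}=\{a,b\}$) and also fully wins the Streett game (always move to $c$, so $\mathsf{Inf}=\{a,c\}$ and the implication is vacuous); hence $Win_0$ of the Streett game is $V$, while your recipe outputs $Win_1$ of the Rabin game, which is $\emptyset$. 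The correct use of duality requires swapping which player controls which vertices: solve the Rabin game on the arena with $V_0$ and $V_1$ exchanged (so the Rabin objective is attached to the old Player~1) and return the complement of the computed Player-0 region; equivalently, and in the spirit of the recast lemmas of Section~\ref{S:Applications}, run the M\"uller solver with the complemented family $2^{V}\setminus\Omega'$, answering membership queries by negating the binary-tree lookups at no extra cost. Either repair preserves the claimed $O((k+3^{|V|})\cdot|V|^2)$ bound.
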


\section{Conclusion}

The algorithms presented in this work give rise to numerous questions that warrant further exploration. For instance, we know that explicitly given M\"uller games can be decided in polynomial time. Yet, we do not know if there are polynomial time algorithms that decide explicitly given  McNaughton games and coloured M\"uller games. Another intriguing line of research is to establish connections between our algorithms and the parameters of the games, with the aim of incorporating these parameters into the running time analysis. Another natural question is to try to decrease the base $3$ in the running times of our algorithms, thereby further optimizing the efficiency. This reduction of computational overhead may uncover new insights and lead to even more efficient algorithms.

\bibliographystyle{ACM-Reference-Format}
\bibliography{bibfile}

\end{document}